\documentclass[11pt]{article}
\usepackage[a4paper,margin=1in]{geometry}
\usepackage[T1]{fontenc}
\usepackage[utf8]{inputenc}
\usepackage{lmodern}
\usepackage{microtype}
\usepackage{graphicx}
\usepackage{amsmath,amssymb}
\usepackage{amsthm}
\usepackage{siunitx}
\usepackage{booktabs}
\usepackage{tabularx}
\usepackage{xcolor}
\usepackage{tikz}
\usetikzlibrary{arrows.meta}
\usepackage[hidelinks]{hyperref}

\newtheorem{theorem}{Theorem}

\newtheorem{lemmaS}{Supplementary Lemma}
\newtheorem{propositionS}{Supplementary Proposition}

\newcommand{\calX}{\mathcal{X}}
\newcommand{\calH}{\mathcal{H}}
\newcommand{\calB}{\mathcal{B}}
\newcommand{\id}{\mathbb{1}}
\newcommand{\nmax}{\bar{n}}
\newcommand{\Ncut}{N}
\DeclareMathOperator{\Tr}{Tr}

\title{Semi-device-independent quantum randomness certification in semiconductor spin-noise measurements}

\author{
Hamid Tebyanian$^{1,\ast}$, Marius Cizauskas$^{2}$, Manfred Bayer$^{2}$,\\
Marc Assmann$^{2}$, and Alex Greilich$^{2}$\\[0.6em]
\small $^{1}$\,School of Physical and Chemical Sciences, Queen Mary University of London, London, United Kingdom\\
\small $^{2}$\,Experimentelle Physik 2, Technische Universität Dortmund, 44221 Dortmund, Germany\\
\small $^{\ast}$\,Corresponding author
}

\date{\today}

\begin{document}
\maketitle

\begin{abstract}
Complex solid-state systems are promising platforms for scalable, high-bandwidth quantum random-number generation, yet certifying the quantum origin of their fluctuations remains difficult because the underlying microscopic dynamics are hard to model and validate. {Here we demonstrate semi-device-independent quantum randomness certification from semiconductor spin noise, to our knowledge the first such certificate on any spin-noise source, without relying on a microscopic model of the spin system.} {The untrusted optical source is constrained by an experimentally tested mean-photon-number bound together with a declared analogue-range and per-sample clipping ceiling, while the trusted receiver is described as a calibrated, noisy, coarse-grained homodyne measurement.} Using a semidefinite programme with rigorously controlled Fock-space truncation, we certify randomness that remains private against an adversary holding arbitrary quantum side information. Offline analysis yields certified entropy rates of $3.2$--$3.4$\,Gbit/s from a singly charged (In,Ga)As quantum-dot ensemble and $33$\,Mbit/s from $n$-GaAs in a spin-noise-matched detection mode. {This exceeds the certified entropy rate of earlier spin-noise generators by more than two orders of magnitude, and the certificate tolerates a resolved per-symbol energy contribution from the solid-state emitter itself rather than requiring a near-vacuum input.}
\end{abstract}

Random numbers underpin modern cryptography, numerical simulation and emerging quantum technologies. Quantum random-number generators exploit the intrinsic unpredictability of quantum measurements, offering an alternative to algorithms whose outputs are ultimately determined by an initial seed. However, observing a noisy quantum signal is not by itself sufficient to guarantee secure randomness. The amount of randomness that can be extracted depends on which parts of the apparatus are trusted, how accurately they are characterised and what information might be available to an adversary. Establishing these assumptions explicitly, while retaining the high generation rates needed for practical applications, remains a central challenge for quantum random-number generation~\cite{Herrero-Collantes2017,Ma2016}.

Different approaches make different compromises between security and experimental complexity. Fully device-independent protocols certify randomness without relying on detailed models of the source or detector, but require loophole-free Bell tests and therefore remain difficult to operate at the bandwidths of conventional optical receivers~\cite{Pironio2010, Liu2018, Liu2021, Shalm2021}. At the other extreme, device-dependent generators can reach very high rates, but their security relies on an accurate physical description of the source and measurement apparatus~\cite{Gabriel2010,Haw2015,Gehring2021}. Semi-device-independent protocols provide an intermediate route: selected properties of the apparatus are calibrated and trusted, whereas the remaining components are treated adversarially under a small number of experimentally testable constraints.
For continuous-variable optical measurements, a particularly natural constraint is the mean optical energy of the incoming signal~\cite{VanHimbeeck2017,Rusca2020,Avesani2020,Tebyanian2021}.

Spin fluctuations in semiconductors offer an attractive, but so far largely unexplored, platform for such certified randomness generation. Even at thermal equilibrium, the finite number of electron spins within an observation volume produces stochastic fluctuations of the collective spin polarisation~\cite{Bloch1946}. These fluctuations can be detected optically through the small Faraday rotation that they impart to a probe beam~\cite{Aleksandrov1981,Crooker2004, Oestreich2005}. Spin-noise spectroscopy has become an established method for studying semiconductor spin dynamics without deliberately polarizing the spin system~\cite{Zapasskii2013,Muller2010SpinNoiseReview}, and spin noise has previously been used to generate random numbers in atomic vapours~\cite{Katsoprinakis2008,Dellis2024}. Those demonstrations, however, relied on device-dependent descriptions of the source and operated at rates limited by the comparatively narrow spin-noise bandwidth. Whether solid-state spin fluctuations can support high-rate randomness that remains certifiable without a microscopic model of the spin system has remained an open question.

{Here we demonstrate semi-device-independent quantum randomness certification using equilibrium spin noise in a semiconductor system. Earlier spin-noise generators, in atomic vapours, were device-dependent, and earlier semi-device-independent generators used vacuum or coherent light; to our knowledge no semi-device-independent certificate has previously been obtained on a spin-noise source of any kind.} A balanced homodyne receiver measures the weak optical field generated in the polarisation orthogonal to the incident probe. We treat the local oscillator, interferometer, detector, digitiser, and binning procedure as a trusted and calibrated measurement apparatus. By contrast, no microscopic model is assigned to the semiconductor source: it may produce any optical state, potentially correlated with arbitrary quantum side information held by an adversary, provided that it satisfies a tested mean-photon-number bound and a declared analogue-range condition. The observed outcomes are described by the calibrated, noisy, and coarse-grained homodyne measurement. We then use a semidefinite programme (SDP), together with a rigorously controlled truncation of the optical Fock space, to bound the conditional min-entropy of the measurement outcomes.

An important feature of this construction is that the measured spin-noise signal cannot artificially increase the certified randomness. Additional source energy enlarges the set of states available to an adversary and therefore lowers, rather than raises, the entropy bound. The spin-noise measurement consequently tests whether substantial randomness can still be certified while the fluctuating solid-state signal is present and experimentally resolved. The source-energy bound can either be supplied as an independent trusted parameter or determined within the protocol by measurements of two optical quadratures. Measuring both quadratures is essential because a source could appear vacuum-like in one quadrature while carrying unrestricted energy in the orthogonal one.

We apply this framework to two complementary spin-noise systems. In an $n$-doped GaAs epilayer, the spin fluctuations form a narrow Larmor resonance~\cite{Petrov2018}, yielding a certified generation rate of 33\,Mbit/s for the 131\,ms acquisition. In a singly charged (In,Ga)As quantum-dot ensemble, hyperfine and electron-$g$-factor variations broaden the spin fluctuations across the receiver bandwidth~\cite{Crooker10,Kamenskii2020}. {This inhomogeneous broadening distributes the spin noise over many effective temporal modes and enables certified rates of $3.2 - 3.4$\,Gbit/s, more than two orders of magnitude above the spin-noise generators demonstrated so far.} Together with a broadband GaAs measurement serving as a vacuum-limited control, these experiments establish semiconductor spin noise as a high-rate entropy source compatible with source-adversarial randomness certification. Although the present demonstration uses offline analysis of fixed-phase records, the same framework provides a route towards real-time composable operation using randomized phase switching between separate test and generation rounds. {We present and analyse two protocols: in protocol~1 the energy bound is supplied as a trusted external input, whereas in protocol~2 it is measured within the protocol by switching the receiver between two orthogonal homodyne phases, at the cost of one additional statistical confidence parameter. Throughout, certified denotes certification relative to the calibrated receiver model and the explicit protocol assumptions summarised in Methods. Supplementary Table~S3 positions this work against earlier spin-noise and continuous-variable generators; the security theorems and proofs, together with the calibration records, are given in Methods and Supplementary Notes~1--8.}

\section*{Results}

\subsection*{Measurement principle and detection regimes}

In a finite spin ensemble, the populations of oppositely oriented spins never balance exactly but fluctuate spontaneously in time. When a polarized probe beam is transmitted through the spin ensemble, the fluctuating magnetization produces a weak optical field in the polarisation orthogonal to the probe polarization through the magneto-optical interaction~\cite{Gorbovitskii1983, Scalbert2019}. Balanced homodyne detection converts this field into a rapidly varying electrical signal. In conventional spin-noise spectroscopy, the power spectrum of this signal is used to extract spin-dynamical parameters~\cite{Zapasskii2013,Muller2010SpinNoiseReview}. Here, the spectral distribution instead determines how the signal occupies effective temporal modes and hence the bandwidth available for randomness generation.

Figure~\ref{fig:concept} summarizes the three detection regimes studied experimentally: broadband GaAs detection, a mode matched to a narrow GaAs Larmor resonance, and broadband detection of a quantum-dot ensemble. All three scenarios use the same calibrated receiver model and certification procedure, so differences in the certified rates can be traced to temporal-mode selection and usable bandwidth rather than to an assumed correspondence between spin-noise variance and intrinsic randomness.

\subsection*{Certification architecture}

The certification model divides the generator into an untrusted source and a trusted receiver, as illustrated in Fig.~\ref{fig:concept}B. The untrusted part is the weak signal mode emerging from the sample in the orthogonal-polarisation channel. We denote this mode by $A$ and allow it to be correlated with an arbitrary quantum system $E$ held by an adversary; together, they are described by the joint state $\rho_{AE}$. No microscopic model of the semiconductor source is imposed.

The source obeys $\Tr(\rho_A\hat n)\le\nmax$, where $\hat n=a^\dagger a$ is the photon-number operator of one effective detected mode and $\nmax$ is its mean-energy upper bound in photons per mode.

The trusted device comprises the local oscillator (LO), interferometer, balanced detector, analogue and digital electronics, phase-control, and the classical post-processing that maps the calibrated output to discrete bins. The receiver is modeled as a noisy binned homodyne positive operator-valued measure (POVM) $\{\Pi_x\}$~\cite{Davies1970, Busch1996}. Three calibrated parameters describe its response: the trusted-noise variance $\sigma^2$, the bin width $\delta$, and the boundary $x_{\max}$ of the interior detection range. The numerical values for the three measurement regimes are collected in Supplementary Table~S2. We handle finite receiver range and saturation as separate conditions, preventing out-of-range signals from being accepted as ordinary generation outcomes.

Given the calibrated POVM and the energy bound, the SDP searches for the state that would make the digitized outcome easiest for an adversary to predict~\cite{Koenig2009,Tavakoli2024}. This worst-case optimization yields the certified entropy per retained symbol, $h^*_{\mathrm{cert}}(\nmax)$, defined in Eq.~\eqref{eq:hcert}. The calculation is performed in a truncated Fock space, with an explicit correction for all neglected higher-photon-number states. It therefore remains valid against arbitrary quantum side information and does not identify the observed spin-noise variance itself with private entropy.

We use two protocols to establish the energy bound. In Protocol~1, $\nmax$ is supplied as an independently validated input. In Protocol~2, it is estimated within the protocol from measurements of two orthogonal homodyne quadratures, using $X_0^2+X_{\pi/2}^2=2\hat n+\id$.

Both phases are needed because a measurement of one quadrature alone cannot exclude energy concentrated in the conjugate quadrature. Finite-sample statistics and corrections for saturation, phase-calibration uncertainty, and local-oscillator instability are included in the resulting bound.

\begin{figure}[t]
\centering
\includegraphics[width=\linewidth]{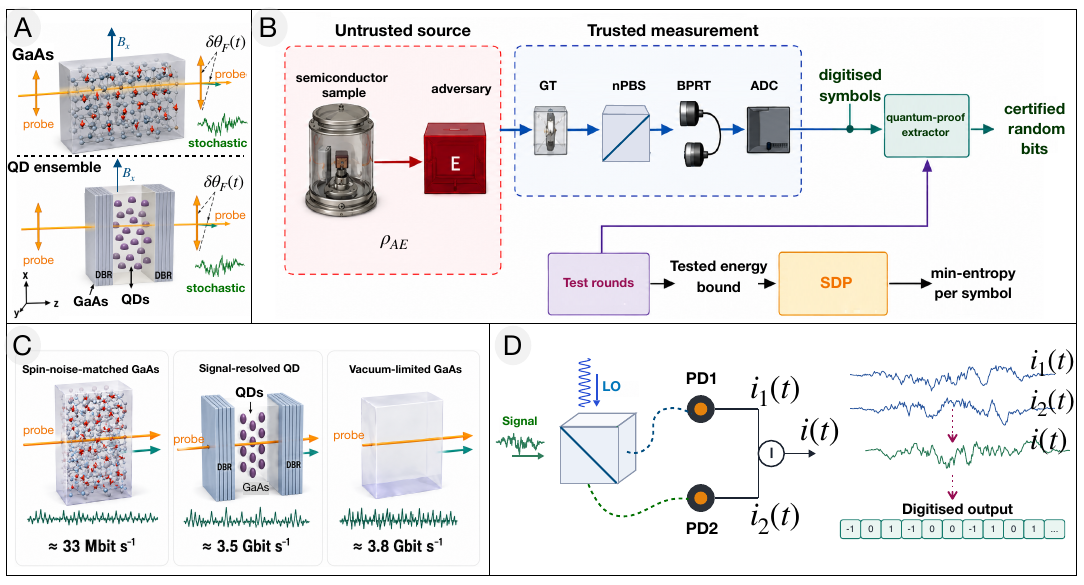}
\caption{{\textbf{Source-adversarial certification of semiconductor spin noise.} \textbf{A}, Sources: bulk $n$-GaAs and a charged $(\mathrm{In},\mathrm{Ga})\mathrm{As}$ quantum-dot ensemble in a microcavity, in transverse field $B_x$; probe Faraday rotation $\delta\theta_F(t)$ from equilibrium spin fluctuations generates the orthogonally polarised signal. \textbf{B}, Trust boundary. The source and any adversary $\rho_{AE}$ with memory $E$ are untrusted, bounded only by the tested mean-photon-number and declared analogue range; the receiver (GT analyzer, nPBS, balanced photoreceiver BPR, ADC) is trusted and calibrated. Two-phase test rounds fix $\nmax^{\rm test}$; the truncated-Fock SDP returns $h^*_{\rm cert}$, which sets the extractor output. \textbf{C}, Three regimes (Table~\ref{tab:results}): spin-noise-matched GaAs, signal-resolved broadband QD, and the vacuum-limited GaAs benchmark. \textbf{D}, Balanced homodyne: signal and LO interfere; the photocurrent difference $i_-(t)$ is digitized to the certified symbol stream.}}
\label{fig:concept}
\end{figure}

Extending the single-mode certificate to the complete record assumes that the retained effective modes are independent and identically distributed (IID). Under this assumption, Theorems~\ref{thm:protocol1} and \ref{thm:protocol2} (see Methods) determine the extractable entropy from $n$ retained rounds. Finally, a quantum-proof two-universal hash compresses the raw record into a shorter, nearly uniform bit string that remains private from the side-information system $E$~\cite{Koenig2009}. Figure~\ref{fig:concept} summarizes this trust structure and the three measurement regimes that instantiate it.

\subsection*{Spin-noise platforms}
We perform balanced homodyne detection of the weak field generated in the polarisation channel orthogonal to a linearly polarised probe transmitted through the sample (Fig.~\ref{fig:setup}). Equilibrium electron spin fluctuations produce stochastic Faraday rotation, thereby generating the orthogonally polarised signal field. The transmitted probe is rejected by a high-extinction Glan-Thompson analyzer, and the signal interferes with an LO of the same polarisation on a 50/50 beamsplitter. The LO never interacts with the spin system, and the balanced difference signal selects one optical quadrature while suppressing common-mode LO intensity noise. A piezo-actuated mirror stabilizes the LO--signal phase with a proportional--integral--derivative lock; the amplification chain and digitiser are specified in Methods.

\begin{figure}[t]
    \centering
    \includegraphics[width=\linewidth]{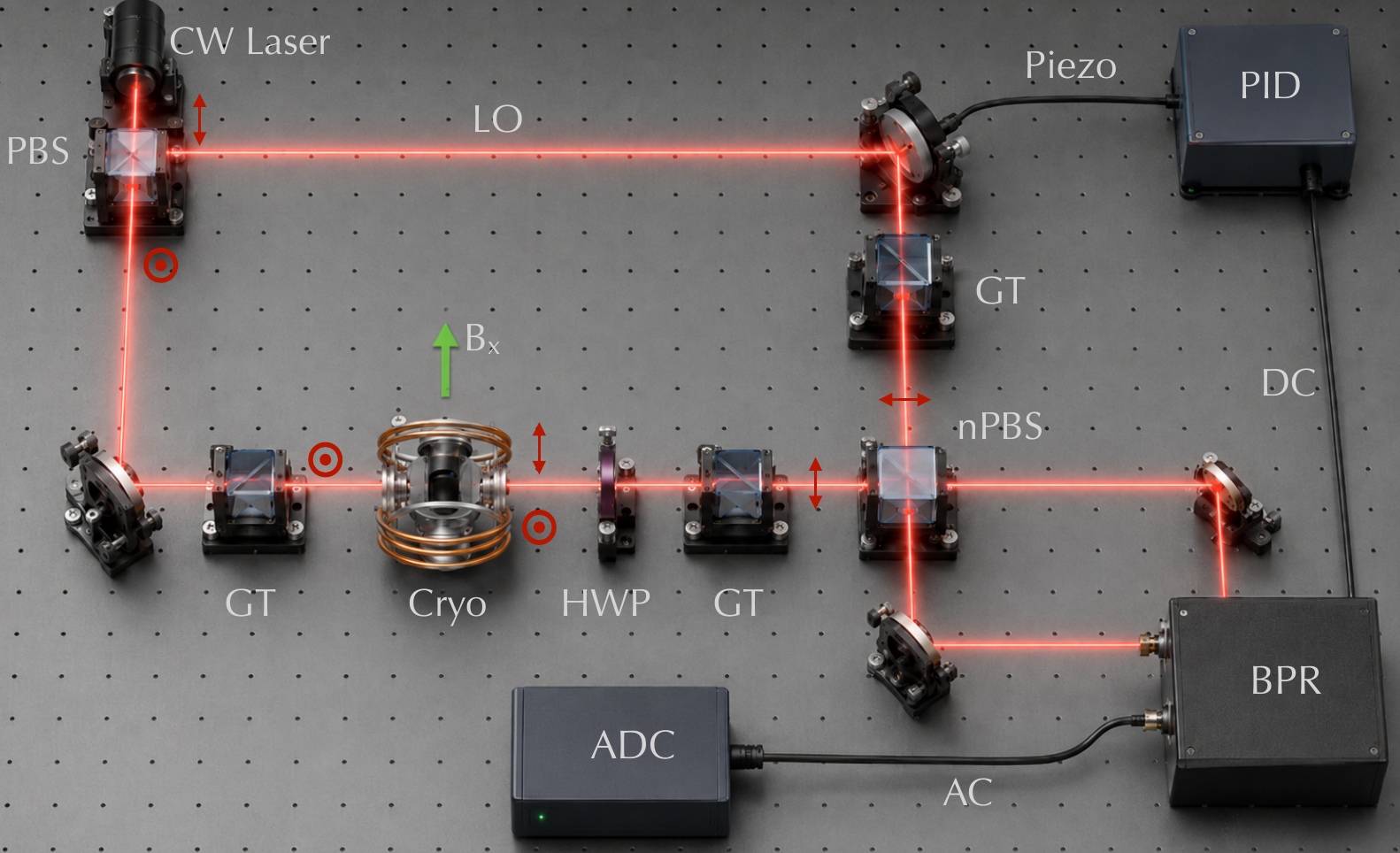}
    \caption{\textbf{Experimental setup.} A continuous-wave laser is split into a probe, which measures spin noise via Faraday rotation in the sample (held at \SI{6}{K} in a transverse magnetic field $B_x$), and a local oscillator (LO) routed around the sample. The orthogonal-polarisation component after the sample interferes with the LO on a 50/50 non-polarizing beamsplitter (nPBS) and is detected by a balanced photoreceiver (BPR). The differential DC output drives the phase lock; the AC component is amplified, filtered, and digitized.}
    \label{fig:setup}
\end{figure}

For $n$-doped GaAs (probe \SI{846.1}{nm}, $B_x=\SI{20}{mT}$, \SI{6}{K}, probe powers \SIrange{10}{60}{mW}, LO \SI{2}{mW} per detector arm~\cite{Petrov2018}) the spin noise is concentrated in a narrow Larmor resonance. Block-averaging the \SI{1.31}{ms} digitiser records over $1310$ segments of 1\,$\mu$s resolves the few-percent excess above the shot-noise floor as a Lorentzian centred at $f_0=\SI{126.0}{MHz}$, with half-width \SI{5.15}{MHz} (Supplementary Fig.~S1). This frequency band defines the spin-noise-matched mode. {The stored one-second on-board FFT spectra, acquired through the separate 8-bit FPGA monitoring path, show the same phase-selective narrow resonance (Fig.~\ref{fig:regimes}A). These spectra illustrate the spectral response only; the spin-noise-matched band used for certification is defined from the digitiser records themselves.}

A second sample on the same receiver, a singly charged $(\mathrm{In},\mathrm{Ga})\mathrm{As}/\mathrm{GaAs}$ quantum-dot ensemble in a planar microcavity ($B_x=\SI{40}{mT}$, {\SI{6.3}{K}}~\cite{Kamenskii2020}), exhibits spin noise whose inhomogeneous broadening, arising from hyperfine and $g$-factor dispersion across the ensemble, fills the entire analogue detection window (Fig.~\ref{fig:regimes}B and Supplementary Fig.~S2). Its per-sample second moment already carries the spin-noise signal. Figure~\ref{fig:landscape} maps the measured spin-noise excess of both platforms against frequency and probe power.

{The regime separation is already visible in the raw broadband digitiser records. Figure~\ref{fig:regimes}C plots the full-band contrast between the two saved lock settings,
\begin{equation}
\label{eq:contrast}
\Delta V(P)=100\,\frac{\operatorname{Var}(Y_{\mathrm{max\text{-}SN}})-\operatorname{Var}(Y_{\mathrm{min\text{-}SN}})}{\operatorname{Var}(Y_{\rm LO})-\operatorname{Var}(Y_{\rm dark})},
\end{equation}
computed from the stored per-record variances. The quantum-dot contrast grows to $14.6\%$ of the shot-noise reference at \SI{30}{mW}, whereas the GaAs full-band contrast remains below $0.7\%$ in magnitude at every power because the narrow line is diluted across the receiver band. $\Delta V$ is a descriptive comparison rather than the security estimator. The tested bound of Eq.~\eqref{eq:nmax-tested-general} uses both absolute second moments together with their statistical, saturation, phase, and LO corrections. The maximum- and minimum-spin-noise labels are acquisition labels taken from the lock metadata, not an independent phase calibration.}

\begin{figure*}[t]
    \centering
    \includegraphics[width=\linewidth]{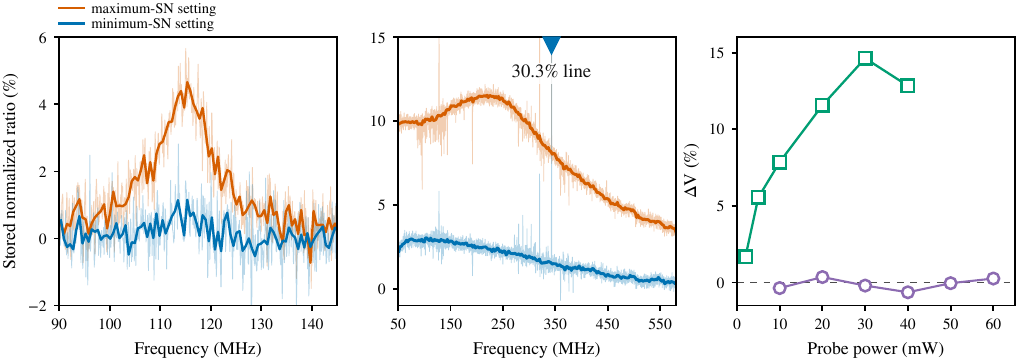}
    \caption{{\textbf{Two spectral regimes in the stored spin-noise data.} \textbf{A}, Normalised FPGA spectra stored for $n$-GaAs at \SI{40}{mW} for the two saved lock settings (maximum- and minimum-spin-noise), showing the phase-selective narrow spin-noise resonance. \textbf{B}, Corresponding quantum-dot spectra at \SI{20}{mW}: the spin-noise excess fills the analysed receiver band; the marker indicates an isolated $30.3\%$ technical line outside the displayed range. Thin curves show the stored ratios; thick curves show frequency-binned medians for visibility. \textbf{C}, Full-band variance contrast $\Delta V$ of Eq.~\eqref{eq:contrast} from the \SI{1.31}{ms} digitiser records at each recorded probe power; error bars are temporal-block standard errors from 128 contiguous blocks within each record and describe within-record variability only.}}
    \label{fig:regimes}
\end{figure*}

\begin{figure*}[t]
\centering
\includegraphics[width=\linewidth]{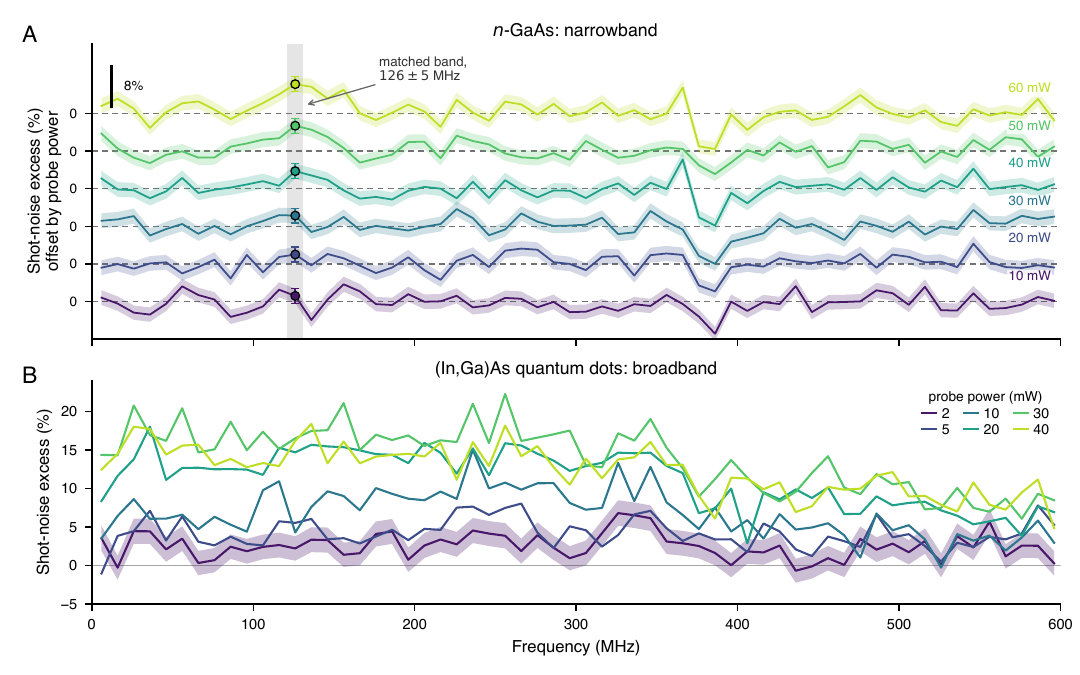}
\caption{\textbf{Narrowband and broadband spin-noise excess.} Spectral excess $100\,(S_{\mathrm{sig}}-S_{\mathrm{LO}})/S_{\mathrm{LO}}$, at the maximum-spin-noise lock setting, obtained from Hann-windowed \SI{1}{\micro\second} blocks of the \SI{1.31}{ms} records and collected into \SI{10}{MHz} bins. \textbf{A}, $n$-GaAs. Traces are offset by $7\%$ per probe power; dashed lines indicate their zero levels. The narrow Larmor signal increases with probe power within the predeclared matched band, \SIrange{121}{131}{MHz}, shaded in grey. Filled circles give the matched-band excess. \textbf{B}, $(\mathrm{In},\mathrm{Ga})\mathrm{As}$ quantum dots, shown without offsets. The spin-noise excess extends across the receiver band and reaches its largest broadband value near \SI{30}{mW}. Shaded bands and error bars denote $\pm1$ standard error, including the Hann-window correlation correction. These spectra are descriptive and are not the energy estimator used for certification.}
\label{fig:landscape}
\end{figure*}

\subsection*{Certified entropy and rates}
The receiver was calibrated from dark and LO-only traces (Supplementary Table~S2). The quantity $h^*_{\mathrm{cert}}(\nmax)$ was evaluated from the dual programme at Fock cutoff $\Ncut=200$, including the truncation correction of Eq.~\eqref{eq:hcert}, and the numerical solution was converted into a {feasible dual point with an a-posteriori residual check} (Supplementary Note~4). Rates are quoted as $R_{\rm cert}=h^*_{\rm cert}f_s$ at extractor error $\varepsilon_{\rm ext}=10^{-10}$; the extractor finite-size penalty is below the quoted rounding in every row (Methods). The operating bin width is $K=64$ digitiser codes; a sweep of $K$ confirms that this is a fixed operating choice rather than a tuned post-selection parameter (Supplementary Fig.~S6). Table~\ref{tab:results} lists the certified results for the matched GaAs, broadband quantum-dot and broadband GaAs measurements. {Each rate is the certified per-mode min-entropy multiplied by the declared effective-mode rate; no extracted composable bitstream is claimed for these offline records.}

\begin{table}[t]
\caption{Certified per-symbol min-entropy and generation rate at $K=64$, $\Ncut=200$, $\varepsilon_{\rm test}=10^{-6}$. For the GaAs matched-mode rows, values obtained from the tested energy bound are followed in parentheses by the corresponding point-estimate values. Matched-mode rates use the conservative effective-mode symbol rate $f_s=\SI{10}{MHz}$; broadband rates use the receiver-bandwidth-limited $f_s=\SI{1.0}{GHz}$. $R_{\rm cert}$ is the certified  generation rate before diversion of test rounds; a real-time energy test diverting a fraction $p_{\rm test}$ of rounds scales it by $(1-p_{\rm test})$. }
\label{tab:results}
\centering
\begin{tabular}{lccc}
\toprule
 & $\nmax^{\rm test}$ & $h^*_{\rm cert}$ (bit) & $R_{\rm cert}$\\
\midrule
GaAs matched, $10$ mW & $0.077\,(0.013)$ & $2.87\,(3.24)$ & $29\,(32)$ Mbit/s\\
GaAs matched, $40$ mW & $0.083\,(0.020)$ & $2.84\,(3.18)$ & $28\,(32)$ Mbit/s\\
GaAs matched, $60$ mW & $0.093\,(0.029)$ & $2.81\,(3.10)$ & $28\,(31)$ Mbit/s\\
GaAs matched, $40$ mW (\SI{131}{ms}) & $0.011\,(0.008)$ & $3.27\,(3.32)$ & $33\,(33)$ Mbit/s\\
GaAs broadband, $10$--$60$ mW & $0.030$--$0.033$ & $3.52$--$3.55$ & $3.52$--$3.55$ Gbit/s\\
QD broadband, $2$ mW    & $0.052$ & $3.36$ & $3.36$ Gbit/s\\
QD broadband, $20$ mW   & $0.084$ & $3.20$ & $3.20$ Gbit/s\\
QD broadband, $30$ mW   & $0.095$ & $3.15$ & $3.15$ Gbit/s\\
QD broadband, $40$ mW   & $0.089$ & $3.18$ & $3.18$ Gbit/s\\
\bottomrule
\end{tabular}
\end{table}

In the GaAs matched mode, the record is bandpassed to $126\pm\SI{5}{MHz}$, so the energy test probes the spin-noise band directly. The point-estimate photon number per symbol grows from $0.013$ to $0.029$ with probe power, tracking the in-band excess of Fig.~\ref{fig:landscape}A. Under Protocol~2 the entropy bound uses the tested upper bound $\nmax^{\rm test}$ rather than the point estimate. On the \SI{1.31}{ms} records this bound is statistics-limited, the finite-statistics and range corrections contributing about $0.063$ photon, so that $\nmax^{\rm test}=0.077$--$0.093$ and the certified rate is near \SI{28}{Mbit/s}. A \SI{131}{ms} record at \SI{40}{mW} shows the contraction of this margin with record length (Fig.~\ref{fig:scaling}): the tested bound reaches $\nmax^{\rm test}=0.011$, compared with the point estimate $0.008$, and the certified rate rises to \SI{33}{Mbit/s}. The support-disjoint evaluation of the long record gives $\nmax^{\rm test}=0.020$ and $h^*_{\rm cert}=3.17$ bits per symbol, one step below at \SI{32}{Mbit/s}. The decomposition of the tested bound into resolved source energy and statistical margin is shown in Supplementary Fig.~S5.

{In the quantum-dot broadband mode, the tested bound contains both the certified randomness and a resolved source-energy contribution for each effective symbol.} In this mode the spin-noise spectrum fills the detection window, so every digitised sample carries source energy and no matched filter is needed: the effective mode rate is set by the trusted receiver bandwidth, $f_s=\SI{1.0}{GHz}$. The tested photon number grows from $0.052$ at \SI{2}{mW} to $0.095$ at \SI{30}{mW}. The corresponding point estimate, $0.015$--$0.059$, constitutes a resolved fraction of the tested bound rather than being set by the statistical floor. The certified rate is $3.2$--$3.4$~Gbit\,s$^{-1}$. {The GaAs broadband measurement on the same receiver shows a full-band contrast below $0.7\%$, whereas the quantum-dot contrast reaches $14.6\%$. The excess is strongly suppressed at the minimum-spin-noise phase and varies non-monotonically with probe power, peaking near \SI{30}{mW}; both behaviours differ from phase-independent electrical pickup and simple monotonic probe leakage. Spin-noise spectroscopy of the same trion-resonant sample was established independently in Ref.~\cite{Kamenskii2020}. This attribution is not required for security:} narrow technical lines from residual laser and RF pickup enter only through the measured second moment, so they can only raise $\nmax^{\rm test}$ and lower the certificate, making these rates conservative. The GaAs broadband mode completes the picture as a control: across the full bandwidth the GaAs spin-noise contribution sits below the per-sample detection threshold, the tested energy is essentially the statistical floor ($0.030$--$0.033$), and the resulting $3.52$--$3.55$~Gbit\,s$^{-1}$ is a vacuum-limited benchmark of the calibrated receiver rather than a spin-noise-attributed rate. {The calibrated quadrature statistics determine the entropy, whereas any source energy, including spin noise, enlarges the adversary's feasible set and lowers the certificate.}

Figure~\ref{fig:master} collects all operating points on the certified entropy--energy curves of the three calibrated receiver modes: one SDP construction governs every regime, with the tested energy fixing the certificate. Tightening $\varepsilon_{\rm test}$ from $10^{-6}$ to $10^{-12}$ lowers $h^*_{\rm cert}$ by less than $0.1$~bit on the broadband and long-record rows; on the three short matched rows the dominant range term of the empirical-Bernstein correction is linear in $\ln(1/\varepsilon_{\rm test})$, giving a cost of about $0.13$~bit, so the results are stable under stricter security parameters.

\begin{figure}[t]
    \centering
    \includegraphics[width=0.85\linewidth]{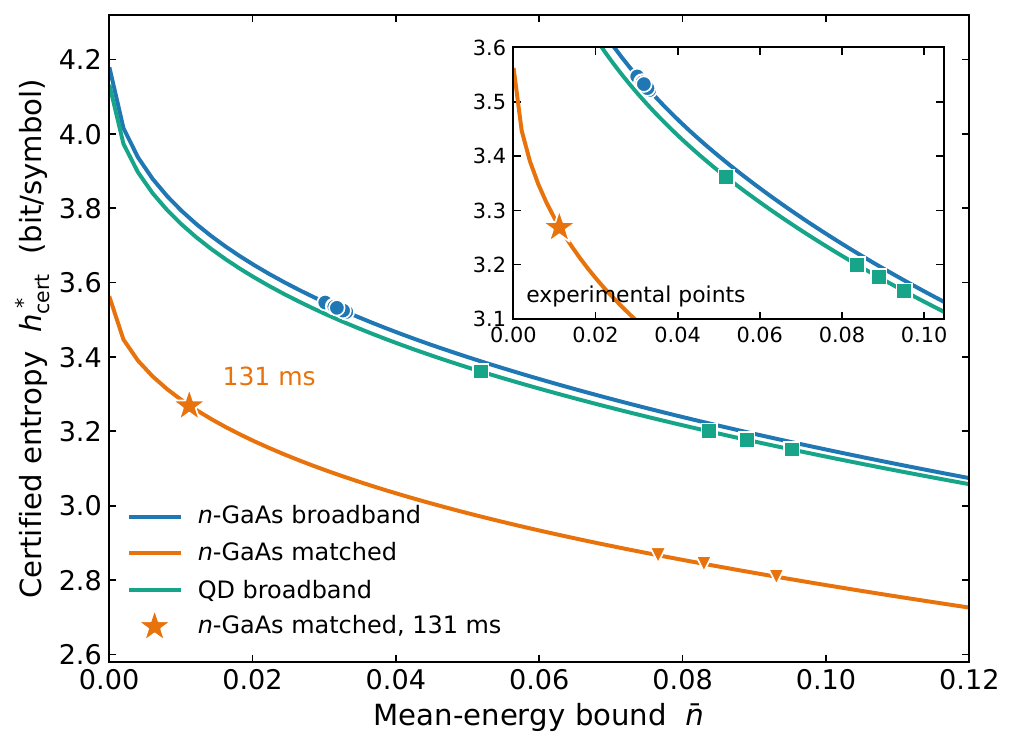}
    \caption{\textbf{Certified entropy versus tested source energy.} Certified per-symbol min-entropy $h^*_{\rm cert}$ as a function of the mean-energy bound $\nmax$ for the three calibrated receiver modes, all obtained using the same truncated-Fock SDP with the corresponding noisy-homodyne POVM. Markers: experimental tested-energy operating points; star: the \SI{131}{ms} GaAs matched acquisition; inset: enlarged experimental region. GaAs broadband points form the vacuum-limited receiver benchmark, GaAs matched points the narrowband spin-noise-attributed certificates, and quantum-dot points the broadband signal-resolved certificates.}
    \label{fig:master}
\end{figure}

\begin{figure}[t]
    \centering
    \includegraphics[width=0.85\linewidth]{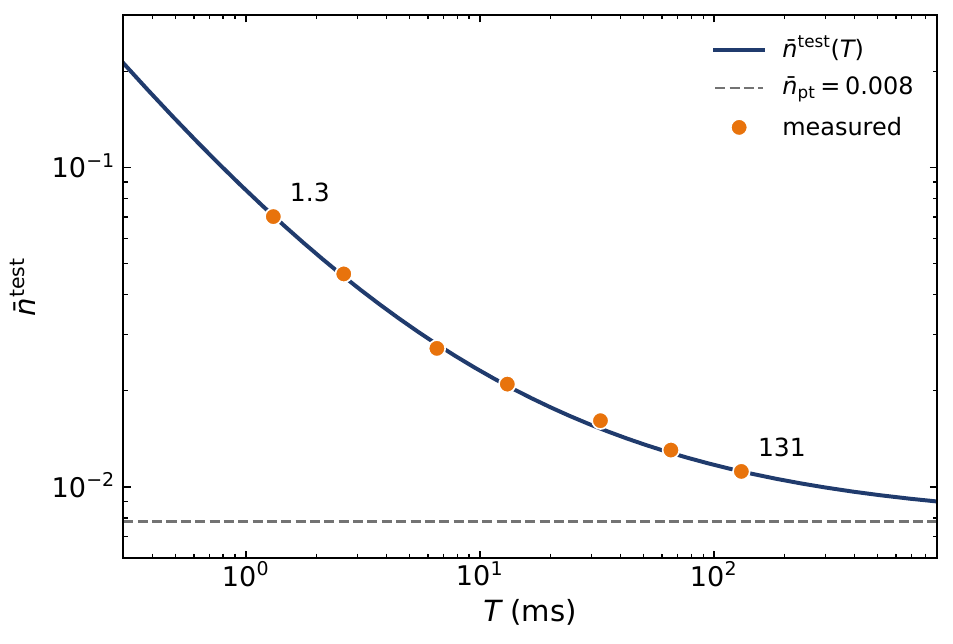}
    \caption{\textbf{Finite-statistics scaling of the energy test.} Matched-mode tested photon-number bound $\nmax^{\rm test}$ at \SI{40}{mW} versus record length $T$. {The empirical-Bernstein penalty contracts with record length — the $T^{-1}$ range term dominating at \SI{1.31}{ms}, the $T^{-1/2}$ variance term dominating at long $T$ — until the signal-resolved floor $\nmax_{\rm pt}\approx0.008$ of the long acquisition is reached.} Markers: the energy test evaluated on prefixes of the \SI{131}{ms} acquisition, from \SI{1.31}{ms} to the full record; curve: the empirical-Bernstein bound with $n_\phi\propto T$ at the moments of the full record.}
    \label{fig:scaling}
\end{figure}

\section*{Discussion}
Equilibrium semiconductor spin noise can therefore be incorporated into a source-adversarial SDI certification framework. Importantly, the certification does not assume that the spin noise itself is trusted or intrinsically random: its measured energy is treated as source-controlled and therefore only enlarges the adversary's feasible set. The trusted receiver is described by a calibrated, coarse-grained noisy-homodyne POVM; the untrusted source is constrained only by a tested mean-energy bound and a declared analogue range; and a finite truncated-Fock SDP with an explicit truncation certificate yields the min-entropy. The two platforms realise complementary regimes — a narrowband certificate in which the GaAs spin-noise energy is exposed by a matched filter, and a broadband certificate in which the quantum-dot ensemble's source energy is resolved on every sample across the receiver bandwidth — with the full-band GaAs measurement serving as the near-vacuum benchmark of the same receiver. {Of the three regimes, the quantum-dot result provides the central experimental demonstration: to our knowledge, it is the first randomness certificate in which a resolved per-symbol energy contribution from a condensed-matter source is explicitly tolerated within the security bound. The matched GaAs and broadband GaAs measurements bracket this regime from the narrowband signal-resolved and near-vacuum limits, respectively.}

The higher effective-mode rate of the solid-state platform comes from the spectral width of the spin fluctuations; the trust model is unchanged. In the cited vapour demonstrations~\cite{Katsoprinakis2008,Dellis2024} the kilohertz-scale spin-relaxation linewidth sets the source correlation time and thereby limits the number of effectively independent temporal modes per second at the source, and the demonstrated rates sit at the values summarised in Supplementary Table~S3 regardless of receiver bandwidth. In the semiconductor platforms measured here, the source bandwidth is instead comparable to, or larger than, the relevant detection bandwidth. The $n$-GaAs resonance is already more than three orders of magnitude broader than in the cited vapour systems, while in the quantum-dot ensemble hyperfine- and $g$-factor-dispersion broaden the spin noise across the full analogue detection window, so the measured quadrature correlations decay on the receiver-bandwidth timescale, opening the gigabit effective-mode regime for a spin-noise source. Inhomogeneous broadening limits the use of quantum-dot ensembles as coherent spin memories, but here it spreads the spin noise across the receiver band and increases the available effective-mode rate. The experiment still requires cryogenic operation, although temperature is not part of the security model. {The method builds on a standard spin-noise-spectroscopy receiver, supplemented by the trusted-detector calibration, energy testing and SDP analysis required for certification. The same certification framework can be applied to other homodyne records when a valid two-quadrature energy bound is available.}

{The present certificates rely on several experimental assumptions: IID effective modes at the stated decorrelation spacing, stationarity between sequential phase records, exact orthogonality of the locked phase pair ($\bar c=0$), the declared analogue-range and saturation bounds, and private receiver noise. Several of these assumptions can be replaced by directly testable conditions in a future implementation. An independently calibrated phase reference replaces the $\bar c=0$ declaration; a $2^\circ$ error would reduce the certified entropy by $0.07$--$0.17$~bit (Supplementary Note~6). An orthonormal temporal-mode construction replaces the decorrelation-spacing convention, especially in the matched mode where adjacent filtered outputs share raw samples (Supplementary Note~7), and a hardware overflow monitor makes the saturation bound testable during operation.} Real-time composable operation requires, in addition, within-run randomised phase switching with disjoint test and generation rounds: the piezo lock must be replaced by an electro-optic or integrated phase modulator switched per round, with guard intervals excluded and the switch record kept in the public transcript. {The real-time acquisition and Toeplitz-extraction chain demonstrated by this collaboration at $33.9$~Gbit\,s$^{-1}$~\cite{Cizauskas2026} already demonstrates the required real-time processing capability. That generator certifies a vacuum state through a heterodyne resolution bound, an established protocol class operated at higher speed; the present work instead moves the source class, certifying a solid-state emitter whose own energy is resolved inside the bound, and the two advances are therefore not comparable on rate.} The certificate itself is unchanged by interleaving, since Protocol~2 already assigns rounds independently of the source. What remains open is the non-IID transfer of the energy test to the generation rounds. The correlated-round techniques of Ref.~\cite{Lu2026}, developed for a ternary-input heterodyne protocol, would need to be adapted to the present discretised-homodyne POVM. A composable treatment would additionally require an appropriate min-tradeoff function for the energy-constrained measurement~\cite{DFR2020,Metger2022,Wolf2006}. {A deployed, composably secure gigabit-class solid-state SDI generator therefore still requires randomised phase switching, disjoint test and generation rounds, and a non-IID security proof.}

\section*{Methods}

\subsection*{QRNG model and trusted receiver}
The source emits a sequence of bosonic modes with annihilation operators $a_i=\int u_i^*(t)\hat b_s(t)\,dt$ defined by temporal mode functions $u_i(t)$ on the $i$-th sampling window; for non-overlapping normalised windows $[a_i,a_j^\dagger]=\delta_{ij}$, and each round carries the Fock space $\calH$ with number operator $\hat n=a^\dagger a$. The adversary holds an arbitrary $\rho_{AE}$ on $\calH\otimes\calH_E$ with $E$ unrestricted, and the only source constraint entering the optimisation is
\begin{equation}
\label{eq:energy-constraint}
\Tr(\rho_A\hat n)\le\nmax,\qquad \rho_A=\Tr_E\rho_{AE}.
\end{equation}
{The trusted receiver defines the complete detected spatiotemporal-polarisation mode: every optical or electrical degree of freedom able to affect $V_\phi$ is rejected by a trusted mode filter, contained in the mode $a$, or contained in the trusted noise $W$, so no auxiliary source mode outside this model couples to the digitised output and evades the constraint on $\Tr(\rho_A\hat n)$.}
The receiver produces $V_\phi=X_\phi+W$ with $X_\phi=(ae^{-i\phi}+a^\dagger e^{i\phi})/\sqrt2$, where $W$ is trusted receiver noise of zero mean and calibrated variance $\sigma^2$, independent of the source state {and, in the security model, private: uncorrelated with $E$ and independent across rounds}. The digitiser returns $Y_\phi=\operatorname{clip}(V_\phi,[-Y_{\min},Y_{\max}])$ and bins it into $x\in\calX$ with uniform width $\delta$ and interior cutoff $x_{\max}=31$. {The SDP is defined not for this clipped map but for the ideal unclipped noisy-homodyne binning, whose POVM is}
\begin{equation}
\label{eq:povm-def}
\Pi_x=\int_{\calB_x}\!\!\int_{\mathbb R}G_\sigma(y-y')\,|y'\rangle\langle y'|\,dy'\,dy,
\end{equation}
with $G_\sigma$ the Gaussian noise kernel and $|y'\rangle$ the generalised $X_0$ eigenstates, fixed completely by $(\sigma,\delta,x_{\max})$. A declared analogue-range model bounds the second moment hidden by saturation, $\mathbb E[V_\phi^2-Y_\phi^2]\le\beta_\phi^{\rm sat}\le(L^2-Y_{\min}^2)\delta_{\rm clip}^{\max}$, with $L$ the declared range scale and $\delta_{\rm clip}^{\max}$ the declared ceiling on the per-sample clipping probability; this and the range model are trusted operating assumptions of the receiver, not quantities inferred from the observed clipping frequency, and near-rail clips are charged to the adversary through the corrections $q_{\rm bb}$ (broadband) and $q_{\rm clip}^{\rm match}$ (matched mode) detailed in Supplementary Notes~1 and 5. {Two logically separate receiver premises enter: the Gaussian law for $W$ is the calibrated in-range noise model of the POVM, while the range premise applies to accepted events only. Samples with $|V_\phi|>L$ are overflow events, counted in the abort probability rather than described by the Gaussian model, so on every non-abort sample $|V_\phi|\le L$ holds by the acceptance rule and supplies the $L^2$ factor; the Gaussian tail beyond $L$ contributes only to the abort probability. Failure of the declared range premise is assigned probability at most $\varepsilon_{\rm range}$ and enters the abort-inclusive composable error budget alongside $\varepsilon_{\rm test}$. The range premise is therefore enforced by the abort, not asserted of the unbounded noise law. A clipped ADC returns the same rail code for $|V_\phi|=1.1\,Y_{\max}$ and $|V_\phi|=5\,Y_{\max}$, so the digitised record tests the clip \emph{probability} but not the clip \emph{amplitude}: $L$ is therefore a trusted analogue amplitude ceiling of the final RF chain rather than an ADC-derived quantity. The declared value $L=2Y_{\max}$ (Supplementary Table~S2) is chosen at or below the measured $1$-dB compression point of the output amplifier, so it bounds $|V_\phi|$ prior to digitisation, and the testable per-sample $\delta_{\rm clip}^{\max}$ is separated from this amplitude bound.} {Because the clipping frequency is set by the source and not by the receiver alone, $\delta_{\rm clip}^{\max}$ is a source-range assumption held in addition to the mean-energy bound, so a low-mean source can still place weight in rare large-amplitude samples and the mean photon number is not the sole source constraint; a deployed device therefore monitors overflow in protocol and aborts, promoting the ceiling from declared to tested.} An LO-only vacuum-reference check of the calibrated POVM over six decades is shown in Supplementary Fig.~S3.

\subsection*{Certified single-round min-entropy}
For the classical-quantum (cq) state $\omega_{XE}$ obtained by measuring $\{\Pi_x\}$ on $\rho_{AE}$, the worst-case guessing probability $c^*(\nmax)=\sup 2^{-H_{\min}(X|E)}$ over all admissible states reduces, by an extension-to-decomposition argument (Supplementary Note~2), to the infinite-dimensional primal
\begin{equation}
c^*(\nmax)=\sup_{\sigma_x\ge0}\Big\{\sum_x\Tr(\sigma_x\Pi_x):\ \Tr\sum_x\sigma_x=1,\ \Tr\big(\hat n\sum_x\sigma_x\big)\le\nmax\Big\}.
\end{equation}
Truncating to the first $\Ncut{+}1$ Fock levels gives a finite SDP $c^*_{\Ncut}(\nmax)$ with strong duality (Supplementary Note~3); its dual reduces to a one-dimensional convex minimisation over the energy multiplier, requiring only largest eigenvalues of $\Pi_x^{(\Ncut)}-\lambda\hat n^{(\Ncut)}$. The truncation error is controlled rigorously: with $\eta=\nmax/(\Ncut+1)$ and $(\mu^*,\lambda^*)$ an optimal dual pair,
\begin{equation}
\label{eq:hcert}
h^*(\nmax)\ \ge\ h^*_{\mathrm{cert}}(\nmax):=-\log_2\!\Big(\min\big\{1,\;c_{\Ncut}^*(\nmax)+(\mu^*)^-\eta+2\sqrt{\eta}+\eta\big\}\Big),
\end{equation}
where $z^-:=\max\{0,-z\}$ and the $2\sqrt\eta$ term bounds the low--high Fock coherence contribution (proof in Supplementary Note~3). Numerical dual solutions are converted into {feasible} dual points with outward tolerances, the POVM completeness residual is below {$10^{-12}$}, and raising $\Ncut$ from $200$ to $300$ changes $h^*_{\rm cert}$ by at most $0.11$~bit at the highest tested energies (Supplementary Note~4). At $\nmax=0$ the bound reproduces the analytic vacuum limit $h^*(0)=-\log_2\max_x[\Pi_x]_{00}$.

\subsection*{Two-quadrature energy certification}
The identity $X_0^2+X_{\pi/2}^2=2\hat n+\id$ gives, for phase-independent trusted noise, $\Tr(\rho_A\hat n)=(\langle V_0^2\rangle+\langle V_{\pi/2}^2\rangle-1-2\sigma^2)/2$. Empirical second moments $S_\phi$ over $n_\phi$ test rounds are converted into a one-sided bound by Hoeffding's inequality on the rail-bounded broadband samples, or by the empirical Bernstein inequality~\cite{MaurerPontil2009} on the clipped squared filtered observable in the matched mode; combined with the saturation correction and a trusted bound $|\cos\theta|\le\bar c$ on the phase pair, except with probability $\varepsilon_{\rm test}$ the protocol aborts or
\begin{equation}
\label{eq:nmax-tested-general}
\Tr(\rho_A\hat n)\le\nmax^{\rm test}(\bar c)=\max\Big\{0,\ \tfrac{S_0+\Delta_0+B^{(0)}}{2(1-\bar c)}+\tfrac{S_\theta+\Delta_\theta+B^{(\theta)}}{2(1-\bar c)}-\tfrac{\sigma^2}{1-\bar c}+\tfrac{\Delta_{\rm LO}}{1-\bar c}-\tfrac12\Big\},
\end{equation}
with $\Delta_\phi$ the concentration terms and $B^{(\phi)}$ the saturation and receiver-zero corrections {$B^{(\phi)}=B^{(\phi)}_{\rm sat}+B^{(\phi)}_{\rm off}$. The reported numbers use a single shared receiver zero, so a residual per-phase offset $d_\phi$ remains; with $\epsilon_\phi$ a confidence bound on $d_\phi$ taken from the phase-$\phi$ LO-only mean, the energy constraint bounds $|\langle X_\phi\rangle|\le\sqrt{2\nmax+1}$, so the adversary-signed cross term $2d_\phi\langle X_\phi\rangle$ gives $B^{(\phi)}_{\rm off}=2\sqrt{2\nmax^{\rm test}+1}\,\epsilon_\phi+\epsilon_\phi^2$, evaluated at the certified $\nmax^{\rm test}$; this is an upper bound because $B^{(\phi)}_{\rm off}$ raises $\nmax^{\rm test}$ monotonically, so the fixed point uses the larger, tabulated value. {The measured per-phase residual is $|\widehat d_\phi|=0.31$ codes ($5.4\times10^{-4}$ in homodyne units) and is not subtracted from the records, so the quantity entering the bound is $\epsilon_\phi=|\widehat d_\phi|+u_\phi$, with $u_\phi$ a one-sided empirical Bernstein half-width for the phase-$\phi$ LO-only mean at $\varepsilon_{\rm test}/4$, evaluated on the predeclared mode count rather than on the raw sample count; this gives $u_\phi\simeq4.3\times10^{-3}$, $\epsilon_\phi\simeq4.9\times10^{-3}$ and $B^{(\phi)}_{\rm off}\simeq1.0\times10^{-2}$,} and applying this correction to the broadband and quantum-dot rows of Table~\ref{tab:results} would raise each $\nmax^{\rm test}$ by $1.0\times10^{-2}$ and lower each $h^*_{\rm cert}$ by $0.04$--$0.08$~bit. The homodyne-unit drift $\Delta_{\rm LO}$ is bounded only from the LO-only traces at the two phases through $|\langle Y^2\rangle_{{\rm LO},0}-\langle Y^2\rangle_{{\rm LO},\pi/2}|\le\delta_{\rm LO}$ with its confidence bound, not from the source-dependent asymmetry $S_0-S_\theta$, and any run exceeding $\delta_{\rm LO}$ aborts} (Supplementary Note~6). The two test phases are the locked extrema of the spin-noise variance, separated by exactly $\pi/2$ for the quadratic phase dependence $\langle X_\phi^2\rangle=A+B\cos2\phi+C\sin2\phi$; the reported results adopt {the trusted-receiver assumption} $\bar c=0$, and the sensitivity to a residual miscalibration ($0.07$--$0.17$~bit at $2^\circ$) together with the phase-scan verification is given in Supplementary Note~6 and Supplementary Fig.~S4. All second moments are uncentred about the LO-calibrated receiver zero, so any coherent displacement is counted as source energy; the phase-dependent-offset contribution to the sum is {bounded by $B^{(\phi)}_{\rm off}\simeq1.0\times10^{-2}$ and quantified for the broadband entries of Table~\ref{tab:results}} (Supplementary Note~6).

\subsection*{Security statements and extraction}
\begin{theorem}
\label{thm:protocol1}
For Protocol~1 with trusted receiver parameters $(\sigma,\delta,x_{\max})$, trusted bound $\nmax$, and an IID source $\rho_{AE}^{\otimes n}$ with $\Tr(\rho_A\hat n)\le\nmax$: $H_{\min}^{\varepsilon_s}(X^n|E^n)\ge H_{\min}(X^n|E^n)=n\,H_{\min}(X|E)\ge n\,h^*_{\mathrm{cert}}(\nmax)$ for every $\varepsilon_s\ge0$.
\end{theorem}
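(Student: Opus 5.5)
The chain of inequalities in Theorem~\ref{thm:protocol1} splits into three links, and we treat them in order.

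\emph{First link: smoothing can only help.} The smooth min-entropy is a supremum over states in an $\varepsilon_s$-ball around $\omega_{X^nE^n}$. This ball contains $\omega_{X^nE^n}$ itself. Hence $H_{\min}^{\varepsilon_s}(X^n|E^n)\ge H_{\min}(X^n|E^n)$ for every $\varepsilon_s\ge0$, with equality at $\varepsilon_s=0$.

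\emph{Second link: additivity for product states.} Because the source is IID and the receiver noise $W$ is private and independent across rounds, the post-measurement cq state is a tensor power $\omega_{X^nE^n}=\omega_{XE}^{\otimes n}$. Each round applies the same POVM $\{\Pi_x\}$ to its own copy of $\rho_{AE}$. We then use the guessing-probability form $p_{\rm guess}(X^n|E^n)=2^{-H_{\min}(X^n|E^n)}$, which is an SDP.
\begin{itemize}
\item[(a)] \emph{Lower bound on $p_{\rm guess}$.} Tensoring optimal single-round guessing measurements gives $p_{\rm guess}(X^n|E^n)\ge p_{\rm guess}(X|E)^n$.
\item[(b)] \emph{Upper bound on $p_{\rm guess}$.} In the dual form $p_{\rm guess}=\min\{\Tr\,\tau_E:\ \tau_E\ge\omega_{E|x}\ \forall x\}$, the tensor product $\tau_E^{\otimes n}$ of single-round dual-feasible operators is feasible for the $n$-round dual. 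Indeed, $\tau_E^{\otimes n}\ge\bigotimes_i\omega_{E|x_i}$ follows from positivity of tensor products of PSD differences, telescoped one factor at a time. This gives $p_{\rm guess}(X^n|E^n)\le p_{\rm guess}(X|E)^n$.
\end{itemize}
Together, (a) and (b) give $H_{\min}(X^n|E^n)=n\,H_{\min}(X|E)$. Equivalently, one may invoke the known additivity of $H_{\min}$ on product cq states.

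\emph{Third link: the single-round certificate.} The remaining inequality is $H_{\min}(X|E)\ge h^*_{\rm cert}(\nmax)$. The single-round state $\rho_{AE}$ satisfies the energy constraint~\eqref{eq:energy-constraint}, so it lies in the feasible set over which $c^*(\nmax)$ is the supremum of $2^{-H_{\min}(X|E)}$. By the extension-to-decomposition reduction (Supplementary Note~2), every admissible $\rho_{AE}$ induces operators $\sigma_x=\Tr_E[(\id\otimes M_x)\rho_{AE}]$, where $\{M_x\}$ is Eve's optimal guessing measurement. These $\sigma_x$ are feasible for the primal, since $\sum_x\sigma_x=\rho_A$. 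Hence $p_{\rm guess}(X|E)\le c^*(\nmax)$. Then Eq.~\eqref{eq:hcert} (Supplementary Note~3) gives
\[
c^*(\nmax)\le c^*_{\Ncut}(\nmax)+(\mu^*)^-\eta+2\sqrt\eta+\eta ,
\]
and the trivial bound $p_{\rm guess}\le1$ supplies the $\min\{1,\cdot\}$. Taking $-\log_2$, which is monotone decreasing, yields $H_{\min}(X|E)\ge h^*_{\rm cert}(\nmax)$.

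The main obstacle is the second link, specifically justifying that $\omega_{X^nE^n}$ really is a tensor power. We must check that the POVM acts round-by-round on independent modes; this uses $[a_i,a_j^\dagger]=\delta_{ij}$ and private, round-independent noise $W$. We must also check that Eve's side information factorises under the IID hypothesis, which is assumed rather than derived. The dual-feasibility tensoring in step (b) is routine once this is granted. The analytic content of the third link is entirely delegated to the truncation bound already established in Eq.~\eqref{eq:hcert}.
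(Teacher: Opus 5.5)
Your proposal is correct and follows the paper's own argument in Supplementary Note~8: the single-round bound $H_{\min}(X|E)\ge h^*_{\rm cert}(\nmax)$ comes from the extension-to-decomposition reduction together with the truncation certificate of Eq.~\eqref{eq:hcert}, followed by exact additivity of $H_{\min}$ on the product cq state and monotonicity under smoothing. Your explicit primal--dual tensoring of the guessing-probability SDP just unpacks the additivity the paper cites as a known fact; in the infinite-dimensional $E$ case step (b) implicitly relies on strong duality of that SDP, which is standard but worth naming.
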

\begin{theorem}
\label{thm:protocol2}
For Protocol~2 with the receiver model above, the test statistics of Eq.~\eqref{eq:nmax-tested-general}, the disjoint source-independent round assignment, and an IID source across rounds: except with probability $\varepsilon_{\rm test}$ the accepted transcript $t$ satisfies $H_{\min}^{\varepsilon_s}\big(X^{n_{\rm gen}(t)}\big|E^{n_{\rm gen}(t)},T{=}t,F{=}1\big)\ge n_{\rm gen}(t)\,h^*_{\mathrm{cert}}\big(\nmax^{\rm test}(t)\big)$.
\end{theorem}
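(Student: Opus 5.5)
The plan is to reduce Theorem~\ref{thm:protocol2} to Theorem~\ref{thm:protocol1} by conditioning on the event that the energy test is sound. Let $\Omega_{\rm good}$ be the event that the true per-round source energy satisfies $\Tr(\rho_A\hat n)\le\nmax^{\rm test}(t)$ or the protocol aborts. First I show that $\Pr[\Omega_{\rm good}^c]\le\varepsilon_{\rm test}$. Second I show that on $\Omega_{\rm good}$, with acceptance $F=1$, the generation rounds satisfy the hypothesis of Theorem~\ref{thm:protocol1} with $\nmax=\nmax^{\rm test}(t)$. The conclusion then follows, since smoothing can only help: $H_{\min}^{\varepsilon_s}\ge H_{\min}$ for every $\varepsilon_s\ge 0$.

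For the first step I use the IID assumption. Every round carries the same marginal $\rho_A$, so each test-round variable $V_\phi^2$ has a fixed mean. By the two-quadrature identity $X_0^2+X_{\pi/2}^2=2\hat n+\id$ and the independence of the trusted noise $W$, this mean is $\langle X_\phi^2\rangle+\sigma^2$. I then handle the deviations one at a time:
\begin{itemize}
\item The difference between the unclipped and clipped moments is absorbed into $B^{(\phi)}_{\rm sat}$ by the declared range model. Accepted samples obey $|V_\phi|\le L$, and the range-premise failure is charged to $\varepsilon_{\rm range}$.
\item The receiver-zero offset is absorbed into $B^{(\phi)}_{\rm off}$ by Cauchy--Schwarz, using $|\langle X_\phi\rangle|\le\sqrt{2\nmax+1}$.
\item LO drift is absorbed into $\Delta_{\rm LO}$.
\item A non-orthogonal phase pair is handled through the factor $1-\bar c$. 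For this I would show $\langle X_0^2\rangle+\langle X_\theta^2\rangle\ge(1-|\cos\theta|)(2\langle\hat n\rangle+1)$ from the quadratic phase dependence.
\end{itemize}
Hoeffding (broadband) or empirical Bernstein (matched) applied to the bounded samples gives one-sided deviations $\Delta_\phi$. I split $\varepsilon_{\rm test}$ between the two phases and the LO-offset estimates by a union bound. Combining these gives exactly Eq.~\eqref{eq:nmax-tested-general} except with probability $\varepsilon_{\rm test}$.

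For the second step the key point is that the round assignment is disjoint and source-independent. Conditioning on which rounds are tests therefore does not alter the IID state on the generation rounds. Conditioning on the transcript $T=t$ is subtler. The test outcomes are classical functions of the test rounds only, which are tensor factors independent of the generation rounds given the IID product $\rho_{AE}^{\otimes n}$, so conditioning on them leaves the generation cq-state as $\omega_{XE}^{\otimes n_{\rm gen}(t)}$. Each factor is then bounded by $h^*_{\rm cert}$ through Eq.~\eqref{eq:hcert}, and additivity of $H_{\min}$ on product states gives the factor $n_{\rm gen}(t)$. I also need $h^*_{\rm cert}$ to be monotone non-increasing in $\nmax$, which holds because the feasible set of the primal grows with $\nmax$.

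The main obstacle is making this conditioning rigorous. The bound $\nmax^{\rm test}(t)$ is a random variable, while the statement quantifies over a fixed $\nmax$. Strictly, the IID assumption with a common $E$ for all rounds means the adversary's system need not factor, and $\varepsilon_{\rm test}$ is a failure probability, not a conditional one. I would first handle this by defining the event over the unknown true energy $n_{\rm true}=\Tr(\rho_A\hat n)$, which is fixed under IID. On $\Omega_{\rm good}\cap\{F=1\}$ we have $n_{\rm true}\le\nmax^{\rm test}(t)$, so monotonicity gives $h^*_{\rm cert}(n_{\rm true})\ge h^*_{\rm cert}(\nmax^{\rm test}(t))$ with no conditioning on $E$ required. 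The statement is thus read as holding for every accepted transcript outside a bad set of probability at most $\varepsilon_{\rm test}$. If the abort-inclusive reading is needed instead, I would fold the bad set into the smoothing parameter in the standard way.
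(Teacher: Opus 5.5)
Your proposal is correct and follows the paper's argument in Supplementary Note~8. That argument has four steps: the tested energy bound is sound except with probability $\varepsilon_{\rm test}$; the disjoint, source-independent round assignment leaves the generation rounds in the product state $\rho_{AE}^{\otimes n_{\rm gen}(t)}$ with $\Tr(\rho_A\hat n)\le\nmax^{\rm test}(t)$; the single-round certificate of Eq.~\eqref{eq:hcert} and additivity of $H_{\min}$ on product states give the bound; and $H_{\min}^{\varepsilon_s}\ge H_{\min}$ finishes it. Your re-derivation of Eq.~\eqref{eq:nmax-tested-general} goes beyond what the proof needs: the paper takes it from Methods, where the offset term is closed by a self-consistent fixed-point argument, because $|\langle X_\phi\rangle|\le\sqrt{2\nmax+1}$ involves the very energy being bounded. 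Your monotonicity step is harmless but unnecessary, since on the good event the generation state already satisfies the energy constraint at $\nmax^{\rm test}(t)$.
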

Proofs follow from Eq.~\eqref{eq:hcert}, the independence of the round assignment from the source state, and exact additivity of conditional min-entropy on product states (Supplementary Note~8); in the matched (broadband) mode $h^*_{\rm cert}$ is read with the clipping correction $q^{\rm match}_{\rm clip}$ ($q_{\rm bb}$). In the theorem statements, $n$ denotes the number of retained effective rounds. A quantum-proof two-universal hash extracts $\ell\le H_{\min}^{\varepsilon_s}-2\log_2(1/\varepsilon_{\rm ext})$ bits that are $(2\varepsilon_s+\varepsilon_{\rm ext})$-close to uniform given $E$ and the transcript~\cite{Koenig2009}; the total error is $\varepsilon_{\rm range}+2\varepsilon_s+\varepsilon_{\rm ext}$ (Protocol~1) or $\varepsilon_{\rm range}+\varepsilon_{\rm test}+2\varepsilon_s+\varepsilon_{\rm ext}$ (Protocol~2, abort-inclusive), {where the total test failure $\varepsilon_{\rm test}=10^{-6}$ is union-bounded over the two phase-moment tests, the offset bound and the LO-stability test, each assigned $\varepsilon_{\rm test}/4$; the resulting change in the concentration widths is within the quoted precision}. The certified rate is $R_{\rm cert}=(h^*_{\rm cert}(\nmax_{\rm in})-\delta_{\rm fs})f_s$ with $\delta_{\rm fs}=2\log_2(1/\varepsilon_{\rm ext})/n_{\rm gen}$, at most $5.1\times10^{-3}$~bit on the short matched rows and $5.1\times10^{-5}$~bit elsewhere. The present records are sequential fixed-phase acquisitions analysed offline, so Protocol~2 is evaluated under stationarity between the phase records, {with the energy estimate and the generation statistics drawn from the same phase-$0$ record}; a composable real-time run additionally reserves disjoint, privately assigned test rounds, scaling $f_s$ by $(1-p_{\rm test})$. {Theorem~\ref{thm:protocol2} as stated applies to the disjoint-round protocol; the offline evaluation applies its single-round entropy bound to the archived records under the stationarity and shared-record conventions above. A composable block statement follows from a predeclared, value-independent split of each phase-$0$ record into a source-blind estimation portion, from which $\nmax^{\rm test}$ is formed, and a disjoint generation portion carrying the extractor input; conditioning on the transcript then leaves the generation block a product state and $R=h^*_{\rm cert}f_s$ holds on it. The reuse above is specific to the offline evaluation, which claims no extracted block.}

\subsection*{Effective-mode convention}
The IID structure refers to effective symbols, not raw \SI{2}{GS/s} samples. In the broadband modes the quadrature decorrelates at the receiver bandwidth: the integrated autocorrelation times are $\tau_{\rm int}(X)=1.73$ (GaAs) and $1.63$ (QD) samples, fixing a predeclared two-sample generation spacing, $f_s=\SI{1.0}{GHz}$ and $n_\phi=n/2$. In the matched mode we count one bosonic mode per inverse filter bandwidth — the two real quadratures of a carrier mode are not independent source modes — giving $f_s=\SI{10}{MHz}$, half the $2\times10^{7}$ real quadrature degrees of freedom per second that a \SI{10}{MHz} band carries, while {the energy-test count follows the measured decorrelation of the squared filtered statistic, $n_\phi=n/\tau_{\rm int}(Z^2)$, and is separate from the generation-mode count. This is a declared effective-sample convention rather than a count of independent quantum modes, and the support-disjoint fallback below prices it;} under the fully support-disjoint convention the short record gives $\lfloor2\,621\,440/1001\rfloor=2618$ test symbols and the long \SI{131}{ms} record gives $\lfloor262\,144\,000/1001\rfloor=261\,882$. For the long record this convention gives $\nmax^{\rm test}=0.020$ and $h^*_{\rm cert}=3.17$ bits per symbol, corresponding to \SI{32}{Mbit/s}, one step below the reported decorrelation-convention value. {Support disjointness is sufficient but not necessary for temporal-mode orthogonality, while decorrelation of the outputs does not establish it. Because orthogonality of the selected temporal modes has not been demonstrated, $f_s=\SI{10}{MHz}$ and $\SI{1.0}{GHz}$ remain declared approximate mode-counting conventions rather than constructed orthonormal-mode rates. The product structure entering Theorems~\ref{thm:protocol1}--\ref{thm:protocol2} is therefore an explicit modelling assumption. A L\"owdin-orthogonalised family $u_i=\sum_jg_j[G^{-1/2}]_{ji}$ can be constructed from the same records, but requires recalibration of the trusted receiver noise in the new basis, since optical-mode orthogonality does not imply electronic-noise independence; this is discussed in Supplementary Note~7.} Non-IID extensions via entropy accumulation, and why they are not yet available for this energy-constrained discretised measurement, are discussed in Supplementary Note~8~\cite{DFR2020,Metger2022,Wolf2006,Tomamichel2009,Lu2026}.

\subsection*{Experimental details}
The balanced photoreceiver (New Focus 1607, \SI{650}{MHz} nominal bandwidth) output was amplified by \SI{20}{dB} (Mini-Circuits ZFL-1000LN+), low-pass filtered near \SI{580}{MHz} (SLP-600+), amplified by \SI{15}{dB} (ZX60-P105LN+), and digitised at \SI{2}{GS/s} (Ultraview AD12-2000, 12-bit, \SI{634}{mV} peak-to-peak). Receiver calibrations (GaAs broadband/matched and QD: $\sigma^2=0.183/0.109/0.198$ in vacuum-variance-$\tfrac12$ units, rails, bin widths, clipping ceilings) are collected in Supplementary Table~S2. {The tabulated $\sigma^2$ is a one-sided lower confidence bound on the trusted receiver noise, not a two-sided point estimate: $\nmax^{\rm test}=\nmax_{\rm true}+(\sigma^2_{\rm true}-\sigma^2_{\rm cal})$ {and $c^*(\nmax,\sigma^2)$ were verified numerically to decrease throughout the receiver-noise calibration interval of each mode, so a value above the truth would lower $\nmax^{\rm test}$ and raise $h^*_{\rm cert}$ in the insecure direction; taking $\sigma^2_{\rm cal}\le\sigma^2_{\rm true}$ is therefore conservative within these intervals. Global monotonicity is not assumed and does not hold: at sufficiently large trusted-noise variance the two semi-infinite tail bins dominate the interior bins and the guessing probability rises again.}} ADC time traces and block-averaged spectra are shown in Supplementary Fig.~S1.

\section*{Data availability}
{}

\section*{Acknowledgements}
A.G. and M.B. acknowledge the Bundesministerium für Bildung und Forschung in the frame of the project QR.N (contract no. 16KIS2201).
A.G. acknowledges funding by the state of North Rhine-Westphalia through the EIN.Quantum.NRW program.

\section*{Author contributions}
H.T. developed the security model, the SDP certificate and the truncation bound, performed the analysis and wrote the manuscript. A.G. and M.C. performed the spin-noise measurements and provided the calibrated homodyne records. All authors discussed the results and reviewed the manuscript.

\section*{Competing interests}
The authors declare no competing interests.

\clearpage
\setcounter{figure}{0}
\setcounter{table}{0}
\setcounter{equation}{0}
\setcounter{section}{0}
\renewcommand{\thefigure}{S\arabic{figure}}
\renewcommand{\thetable}{S\arabic{table}}
\renewcommand{\theequation}{S\arabic{equation}}
\renewcommand{\thesection}{Supplementary Note \arabic{section}}
\renewcommand{\thesubsection}{\thesection.\arabic{subsection}}

\begin{center}
{\LARGE\bfseries Supplementary Information\par}
\vspace{0.5em}
{\large Semi-device-independent quantum randomness certification in semiconductor spin-noise measurements\par}
\end{center}
\vspace{1em}

\section{Fock-space elements of the trusted POVM and clipping corrections}
\label{sn:fock}
The matrix elements $[\Pi_x]_{mn}=\langle m|\Pi_x|n\rangle$ of the noisy-homodyne POVM follow by exchanging the order of integration. With the harmonic-oscillator wavefunctions $\psi_n(y)=H_n(y)e^{-y^2/2}/(\pi^{1/4}\sqrt{2^n n!})$,
\begin{equation}
[\Pi_x]_{mn}=\int_{\mathbb R}\psi_m(y')\psi_n(y')\,\Phi_\sigma(y';\calB_x)\,dy',
\qquad
\Phi_\sigma(y';\calB_x)=\Phi\!\Big(\tfrac{y_x^+-y'}{\sigma}\Big)-\Phi\!\Big(\tfrac{y_x^--y'}{\sigma}\Big),
\end{equation}
where $\Phi$ is the standard normal CDF and $y_x^\pm=(x\pm\tfrac12)\delta$ are the interior bin edges; the two tail bins collect everything beyond $\pm(x_{\max}+\tfrac12)\delta$. Since $|\psi_n|^2$ has variance $n+\tfrac12$, convolution with the trusted noise gives total variance $n+\tfrac12+\sigma^2$, and for the vacuum $[\Pi_0]_{00}=\mathrm{erf}\big(\delta/(2\sqrt2\,\sigma_{\rm eff})\big)$ with $\sigma_{\rm eff}^2=\tfrac12+\sigma^2$. Off-diagonal elements reduce to Hermite--Gaussian integrals through $H_mH_n=\sum_k\binom{m}{k}\binom{n}{k}2^kk!\,H_{m+n-2k}$, and the parity rule $[\Pi_{-x}]_{mn}=(-1)^{m+n}[\Pi_x]_{mn}$ halves the number of distinct matrices.

Because the receiver zero is offset from the digitiser midpoint, the smaller rail sits at $Y_{\min}\approx30.8\,\delta$, inside interior bin $\pm31$, so a near-rail clip lands in an interior bin rather than the tail and exact tail-bin invariance fails. Charging every such event to the adversary adds $q_{\rm bb}=\delta_{\rm clip}^{\max}=2.6\times10^{-5}$ to the guessing probability,
\begin{equation}
h^{\rm bb}_{\rm cert}=-\log_2\!\big(\min\{1,\,c^*_{\Ncut}(\nmax)+(\mu^*)^-\eta+2\sqrt\eta+\eta+q_{\rm bb}\}\big),
\end{equation}
where $z^-:=\max\{0,-z\}$. The correction shifts every tabulated broadband entropy by at most $4.4\times10^{-4}$~bit and changes no reported value.

The reported values assume that the trusted noise $W$ is independent of the source state and that its realisation is unavailable to the adversary. If $W$ were public, the receiver would be described by the conditional noiseless POVM $\Pi_{x|w}=\int_{\calB_x-w}|y\rangle\langle y|\,dy$, and the interchange $\sup_\rho\int\le\int\sup_\rho$ gives the conservative bound
\begin{equation}
c^*_{\rm pubW}(\nmax)\;\le\;\int_{\mathbb R} G_\sigma(w)\,c^*_w(\nmax)\,dw,
\end{equation}
with $c^*_w$ the energy-bounded worst-case guessing probability of the same SDP evaluated at $\{\Pi_{x|w}\}_x$. This bound certifies less entropy and requires a separate numerical evaluation, which is why privacy and cross-round independence of the receiver electronics are listed among the trust assumptions of the certificate.

\section{Reduction from extensions to decompositions}
\label{sn:lemma}
\begin{lemmaS}
For a fixed POVM $\{\Pi_x\}_{x\in\calX}$ on $A$,
\begin{equation}
\sup_{\rho_{AE}:\,\Tr_E\rho_{AE}=\rho_A}p_{\rm guess}(X|E)
=\sup_{\substack{\sigma_x\ge0,\ \sum_x\sigma_x=\rho_A}}\sum_{x}\Tr(\sigma_x\Pi_x).
\end{equation}
\end{lemmaS}
\begin{proof}
For a fixed extension, any POVM $\{N_x\}$ on $E$ induces $\sigma_x=\Tr_E[(\id\otimes N_x)\rho_{AE}]\ge0$ with $\sum_x\sigma_x=\rho_A$, giving ``$\le$''. Conversely, for any decomposition $\{\sigma_x\}$ the Hughston--Jozsa--Wootters theorem provides a purification of $\rho_A$ and a POVM on $E$ realising it, giving ``$\ge$''.
\end{proof}
Only the elementary extension-to-decomposition direction enters the certified upper bound; the converse only shows attainability.

\section{Strong duality and the truncation certificate}
\label{sn:trunc}
\paragraph{Strong duality.}
On the truncated space $P_\Ncut\calH$, the point $\rho_0=Z(r)^{-1}\sum_{n=0}^{\Ncut}r^n|n\rangle\langle n|$ with $r=\nmax/(2+\nmax)$, split equally over the outcomes, is strictly feasible: all eigenvalues are positive and $\Tr(\hat n^{(\Ncut)}\rho_0)\le r/(1-r)=\nmax/2<\nmax$. Slater's condition therefore gives strong duality for every $\nmax>0$; the boundary case $\nmax=0$ is the analytic vacuum limit. The dual is
\begin{equation}
c^*_\Ncut(\nmax)=\min_{\mu\in\mathbb R,\,\lambda\ge0}\big\{\mu+\lambda\nmax:\ \mu P_\Ncut+\lambda\hat n^{(\Ncut)}\succeq\Pi_x^{(\Ncut)}\ \forall x\big\},
\end{equation}
and for fixed $\lambda$ the optimal $\mu(\lambda)=\max_x\lambda_{\max}(\Pi_x^{(\Ncut)}-\lambda\hat n^{(\Ncut)})$ is convex in $\lambda$, so the computation is a one-dimensional convex search over largest eigenvalues.

\begin{propositionS}
Let $(\mu^*,\lambda^*)$ be {any dual-feasible pair (i.e.\ $\lambda^*\ge0$ and $\mu^* P_\Ncut+\lambda^*\hat n^{(\Ncut)}\succeq\Pi_x^{(\Ncut)}$ for all $x$; the bound uses only feasibility, not optimality)} and $\eta=\nmax/(\Ncut+1)$. Then
\begin{equation}
c^*(\nmax)\le {(\mu^*+\lambda^*\nmax)}+(\mu^*)^-\eta+2\sqrt\eta+\eta .
\end{equation}
{At the dual optimum $\mu^*+\lambda^*\nmax=c^*_\Ncut(\nmax)$, which recovers $c^*(\nmax)\le c^*_\Ncut(\nmax)+(\mu^*)^-\eta+2\sqrt\eta+\eta$; the released certificate evaluates the right-hand side at the numerically feasible pair $(\widehat\mu_{\rm cert},\widehat\lambda)$ of Supplementary Note~4, for which $\mu^*+\lambda^*\nmax=\widehat\mu_{\rm cert}+\widehat\lambda\nmax\ge c^*_\Ncut(\nmax)$ by weak duality.}
\end{propositionS}
\begin{proof}
For any feasible $\{\sigma_x\}$ with $\rho_A=\sum_x\sigma_x$, the energy bound and $\hat n\ge(\Ncut+1)Q_\Ncut$ give the tail weight $\Tr(\rho_AQ_\Ncut)\le\eta$ and $p_\le:=\Tr(\rho_AP_\Ncut)\ge1-\eta$. Split the objective into $T_1$ (both indices low), $T_2$ (both high) and $T_3$ (coherences). Dual feasibility gives $T_1\le\mu^*p_\le+\lambda^*\Tr(\hat n^{(\Ncut)}P_\Ncut\rho_AP_\Ncut)\le {(\mu^*+\lambda^*\nmax)}+(\mu^*)^-\eta$, using $[\hat n,P_\Ncut]=0$, {$\Tr(\hat n^{(\Ncut)}P_\Ncut\rho_AP_\Ncut)\le\nmax$} and $\mu^*p_\le\le\mu^*+(\mu^*)^-\eta$. Since $\|\Pi_x\|_\infty\le1$, $T_2\le\Tr(\rho_AQ_\Ncut)\le\eta$. For $T_3$, $|\Tr(P\sigma_xQ\,\Pi_x)|\le\|P\sigma_xQ\|_1\le\sqrt{\Tr(P\sigma_xP)\Tr(Q\sigma_xQ)}$ and a Cauchy--Schwarz over $x$ give $|T_3|\le2\sqrt{p_\le}\sqrt{\Tr(\rho_AQ_\Ncut)}\le2\sqrt\eta$.
\end{proof}
For this POVM $\mu^*\ge[\Pi_0]_{00}>0$, so $(\mu^*)^-=0$ and the correction is dominated by $2\sqrt\eta$; a tolerance $\epsilon_{\rm trunc}$ needs $\Ncut\gtrsim4\nmax/\epsilon_{\rm trunc}^2$ as an order-of-magnitude rule.

\section{Analytical checks and certified numerical evaluation}
\label{sn:numerics}
At $\nmax=0$ the only admissible state is the vacuum, $h^*(0)=-\log_2\max_x[\Pi_x]_{00}$, which reproduces the differential-entropy scaling $\tfrac12\log_2(2\pi\sigma_{\rm eff}^2)-\log_2\delta+o(1)$ for fine binning. The strategy $\rho_A=|0\rangle\langle0|$ gives the universal lower bound $c^*(\nmax)\ge\max_x[\Pi_x]_{00}$, which the SDP improves on because superpositions and side information admit structured attacks; coherent-state benchmarks depend on the tail-bin cutoff and are stated with that dependence.

For every operating point the numerical dual $(\widehat\mu,\widehat\lambda)$ is converted into a {feasible} dual point: with $r_x=\lambda_{\max}(\Pi_x^{(N)}-\widehat\mu P_N-\widehat\lambda\hat n^{(N)})$ evaluated in floating point {and $\widehat\lambda$ rounded upward to a non-negative value}, the choice $\widehat\mu_{\rm cert}=\widehat\mu+\max\{0,\max_xr_x\}+\epsilon_{\rm num}$ satisfies dual feasibility and weak duality certifies $c^*_N(\nmax)\le\widehat\mu_{\rm cert}+\widehat\lambda\nmax$ {provided the declared tolerance $\epsilon_{\rm num}$ dominates the combined error of the Fock-matrix quadrature, matrix assembly, floating-point arithmetic and the eigensolver; this dominance is a stated numerical premise of the release, not an interval-arithmetic theorem, and the released scripts report the residuals against which it is checked}. {We check this premise a posteriori using the released scripts. For each $x$ the certified matrix $M_x=\widehat\mu_{\rm cert}P_N+\widehat\lambda\hat n^{(N)}-\Pi_x^{(N)}$ must be positive semidefinite for the true truncated matrices; writing $\widetilde\Pi_x$ for the assembled POVM and $\rho_x\ge\|\widetilde\Pi_x-\Pi_x^{(N)}\|_{\rm op}$ for its assembly error, feasibility holds whenever $\lambda_{\min}(\widehat\mu_{\rm cert}P_N+\widehat\lambda\hat n^{(N)}-\widetilde\Pi_x)\ge\rho_x$ for every $x$. Evaluated on the released matrices ($N=200$; Gauss--Legendre quadrature doubled from $6000$ to $12000$ nodes to bound the assembly error), $\max_x\rho_x\le9.2\times10^{-13}$ and the completeness residual $\|\sum_x\Pi_x^{(N)}-\id\|_{\rm op}<10^{-12}$ (floating-point-limited), so setting $\epsilon_{\rm num}=2\times10^{-8}$, more than four orders of magnitude above the observed assembly residual and above the reported floating-point and eigensolver residuals, certifies $c^*_N(\nmax)\le\widehat\mu_{\rm cert}+\widehat\lambda\nmax$. Its effect on the tabulated entropy is negligible against the truncation correction $2\sqrt\eta\approx0.045$. Directed-rounding interval enclosures of the {\rm erf}/Hermite--Gaussian integrals would replace this floating-point residual check by a formal one, at unchanged values.}
The coherence term uses the universal bound $\|Q_N\Pi_xP_N\|_\infty\le1$; raising $\Ncut$ from $200$ to $300$ increases the certified entropy by at most $0.11$~bit at the highest tested energies, so the reported $\Ncut=200$ values are conservative.

\section{Matched-mode clipping correction}
\label{sn:matched}
The matched-mode certificate uses the calibrated linear filtered-mode POVM, which the implemented map realises on every retained symbol whose finite-impulse-response support contains no rail-clipped raw sample. With $m_h=1001$ raw samples in the single-pass FIR support and the declared per-sample clipping ceiling $\delta_{\rm clip}^{\max}=2.6\times10^{-5}$,
\begin{equation}
q^{\rm match}_{\rm clip}:=\min\{1,\,m_h\,\delta_{\rm clip}^{\max}\}=2.6026\times10^{-2}
\ \ge\ \Pr[\text{a support sample clips}],
\end{equation}
and charging every symbol with a clipped support sample to the adversary gives
\begin{equation}
h^{\rm match}_{\rm cert}(\nmax)=-\log_2\!\big(\min\{1,\,c^*_\Ncut(\nmax)+(\mu^*)^-\eta+2\sqrt\eta+\eta+q^{\rm match}_{\rm clip}\}\big),
\end{equation}
used for all matched-mode entries. {All matched-mode certificates are evaluated at a value not below the exact union bound $2.6026\times10^{-2}$; rounded forms such as $0.02603$ appear only as printed values.} This accounting treats the rail event as the source of implementation--ideal mismatch; it is exact when the calibrated bin map represents the digitiser transfer on in-range samples, and a deployed device would require a fully source-uniform implementation-coupling bound that also covers residual quantiser effects for arbitrary near-bin-edge inputs. A forward--backward filter implementation would use the support of $h_{\rm eff}=h\ast h^{\leftarrow}$ in place of $m_h$. {The band-pass has DC gain $\sum_k h_k=-9.9\times10^{-5}$, so a constant per-phase receiver-zero offset is suppressed and its leakage into the matched energy statistic — {the adversary-signed cross term linear in $\sum_k h_k$, bounded by $2\epsilon_\phi|\textstyle\sum_kh_k|\sqrt{2\nmax^{\rm test}+1}\simeq1.0\times10^{-6}$ photon, plus the quadratic $(\sum_k h_k)^2$ term} — is negligible; the matched-mode tested bounds, including the long-record \SI{33}{Mbit/s} result, therefore carry no $B_{\rm off}$ correction.}

For the energy test, the squared filtered observable is clipped at the declared threshold $C=(6\sigma_{\rm filt})^2$, with $\sigma_{\rm filt}$ the calibrated standard deviation of the bandpass-filtered LO-only reference, so the empirical-Bernstein premise holds for arbitrary source states. The energy above $C$ is restored through the trusted bounded-filter-output range $L_{\rm filt}=Y_{\max}^{\rm raw\to match}\sum_k|h_{{\rm eff},k}|$ and a declared ceiling $\delta_{\rm clip,filt}^{\max}=4.0\times10^{-8}$ on the $C$-crossing probability; no crossing of $6\sigma_{\rm filt}$ is observed in the records, and the combined filtered tail correction stays below $10^{-3}$ photon on the present data, keeping the tested bound an upper bound. Failure of the declared analogue-range premise is assigned probability at most $\varepsilon_{\rm range}$. This contribution is separate from $\varepsilon_{\rm test}$ and is included in the abort-inclusive composable error.

\section{Energy-test details and phase-calibration sensitivity}
\label{sn:energy}

Broadband test samples obey $Y_j^2\in[0,Y_{\max}^2]$ by the hard rail, so Hoeffding's inequality gives the per-phase deviation $\Delta_\phi=Y_{\max}^2\sqrt{\ln(2/\varepsilon')/(2n_\phi)}$ at per-test failure $\varepsilon'=\varepsilon_{\rm test}/4$, the total $\varepsilon_{\rm test}$ union-bounded over the two phase tests, the offset bound and the LO-stability test. For the matched mode the empirical Bernstein inequality on $\tilde Z^2=\min(Z^2,C)\in[0,C]$ gives
\begin{equation}
\Delta_{Z,\phi}=\sqrt{\frac{2v_{{\rm emp},\phi}\ln(2/\varepsilon')}{n_\phi}}+\frac{7C\ln(2/\varepsilon')}{3(n_\phi-1)},\qquad\varepsilon'=\varepsilon_{\rm test}/4,
\end{equation}
with the exact Maurer--Pontil constant. The two phases combine by a union bound into total failure $\varepsilon_{\rm test}$.

Phase-independence of the trusted noise is constrained by the LO-only second-moment condition $|\langle Y^2\rangle_{{\rm LO},0}-\langle Y^2\rangle_{{\rm LO},\pi/2}|\le\delta_{\rm LO}$, measured from dark and LO-only traces at both settings; the measured $\delta_{\rm LO}\approx4.9\times10^{-3}$ (GaAs) is retained as an abort threshold. {The cancellation of the noise-variance asymmetry in the sum is not by itself a bound on the receiver offset: with a phase-dependent zero $d_\phi$ and a displaced source the uncentred moment carries $2d_\phi\langle X_\phi\rangle$, and a shared two-phase average zero leaves $d_0=-d_\theta$, so the sum keeps $2d_0(\langle X_0\rangle-\langle X_\theta\rangle)$, first order in the offset and adversary-signed. The reported offline numbers use this shared average zero; the residual $d_\phi$ is bounded by $\epsilon_\phi$, the confidence interval of the phase-$\phi$ LO-only mean, any residual drift between the sequential LO-only and source records falling under the declared stationarity assumption, and enters the tested bound through $B^{(\phi)}_{\rm off}=2\sqrt{2\nmax^{\rm test}+1}\,\epsilon_\phi+\epsilon_\phi^2$, the energy constraint $|\langle X_\phi\rangle|\le\sqrt{2\nmax^{\rm test}+1}$ replacing the looser rail bound $|\langle X_\phi\rangle|\le L$ and evaluated at the certified $\nmax^{\rm test}$, an upper bound since $B^{(\phi)}_{\rm off}$ raises $\nmax^{\rm test}$ monotonically. {On the GaAs records the per-phase LO-only residual under the shared average zero is $|\widehat d_\phi|=0.31$ codes, corresponding to $5.4\times10^{-4}$ in homodyne units, and is not subtracted from the records. The one-sided empirical-Bernstein half-width of the phase-$\phi$ LO-only mean at failure probability $\varepsilon_{\rm test}/4$, evaluated on the predeclared mode count, is $u_\phi\simeq4.3\times10^{-3}$. Hence $\epsilon_\phi=|\widehat d_\phi|+u_\phi\simeq4.9\times10^{-3}$ and $B^{(\phi)}_{\rm off}\simeq1.0\times10^{-2}$. Residual drift between the sequential LO-only and source records remains under the declared stationarity assumption. Applying this correction would raise each broadband tested photon number by $1.0\times10^{-2}$ and lower each broadband $h^*_{\rm cert}$ by $0.04$--$0.08$~bit. The reported entries do not carry it. Its failure probability is one term of the $\varepsilon_{\rm test}$ union bound. The matched entries carry no such term because the band-pass, with DC gain $-9.9\times10^{-5}$, suppresses a constant offset.} A genuine source displacement is not removed by any centring and is still counted as source energy. The homodyne-unit drift is bounded separately from the LO-only traces alone through $|\langle Y^2\rangle_{{\rm LO},0}-\langle Y^2\rangle_{{\rm LO},\pi/2}|\le\delta_{\rm LO}$, since the source-dependent asymmetry $S_0-S_\theta$ cannot bound it; $\delta_{\rm LO}\approx4.9\times10^{-3}$ is the abort threshold, and the quantum-dot run uses the shared-receiver characterisation with a single LO reference fixing its zero and scale. For the quantum-dot measurements, the per-phase offset $d_\phi$ and stability threshold $\delta_{\rm LO}$ are inherited from the GaAs receiver characterisation under the stationarity assumption, while $\bar c=0$ is supported by the GaAs phase scan in Supplementary Fig.~\ref{fig:phase_scan}. A dedicated two-phase QD LO-only acquisition would measure $d_\phi$, $\delta_{\rm LO}$ and the phase suppression directly; Supplementary Table~\ref{tab:cbar} quantifies the effect of replacing $\bar c=0$ by $\bar c=\sin2^\circ$.}

The two test phases are the locked extrema of the spin-noise variance; the quadratic dependence $\langle X_\phi^2\rangle=A+B\cos2\phi+C\sin2\phi$ separates its extrema by exactly $\pi/2$ and makes their sum basis-independent. {The extrema located from the untrusted signal identify operating points but do not calibrate their relative phase; the quadratic law makes any true variance-extremum pair exactly orthogonal, so the content of $\bar c=|\cos\theta|=0$ is trust that the lock realises both extrema. Because the error signal derives from the DC interference of the source-dependent scattered field, extremum location is itself model-based, and independent per-point lock errors $\delta\phi_0,\delta\phi_\theta$ give $\bar c\simeq|\delta\phi_0-\delta\phi_\theta|$. The reported results declare $\bar c=0$ on this basis; the $2^\circ$ sensitivity below prices the declaration, and an independent calibration of the trusted phase actuator supplying a predeclared $\bar c_{\rm cal}$ with its uncertainty is the upgrade that removes it.} The general bound follows from positivity of the second-moment matrix: for $|\cos\theta|\le\bar c$,
\begin{equation}
\langle X_0^2\rangle+\langle X_{\pi/2}^2\rangle\le\frac{\langle X_0^2\rangle+\langle X_\theta^2\rangle}{1-\bar c},
\qquad
\nmax^{\rm test}(\bar c)=\frac{\nmax^{\rm test}(0)+\tfrac12}{1-\bar c}-\frac12 .
\end{equation}
A $2^\circ$ residual ($\bar c=\sin2^\circ\simeq0.035$) raises the tested photon numbers by $0.018$--$0.022$ and, re-solving the SDP, lowers $h^*_{\rm cert}$ by $0.07$--$0.17$~bit, largest at the smallest tested energies. The phase scan of Supplementary Fig.~\ref{fig:phase_scan} verifies the sinusoidal dependence and the suppression at the second operating point; an independently calibrated nonzero $\bar c$ would be inserted directly into the bound. Because the lock acts on the DC level of the scattered light, the operating pair is re-established at every probe power. {Supplementary Table~\ref{tab:cbar} lists the full $\bar c=\sin2^\circ$ recomputation for every operating point.}

\begin{table}[t]
\centering
\caption{{Conservative phase-calibration fallback: tested photon number and certified entropy re-solved at $\bar c=\sin2^\circ\simeq0.035$ from the reported $\bar c=0$ energies, none of which carries $B^{(\phi)}_{\rm off}$. These values quantify the cost of the $\bar c=0$ assumption; the reported results use $\bar c=0$.}}
\label{tab:cbar}
\footnotesize
\begin{tabular}{lcc}
\toprule
Row & $\nmax^{\rm test}(\sin2^\circ)$ & $h^*_{\rm cert}$ (bit)\\
\midrule
GaAs matched, $10$ mW & $0.097$ & $2.79$\\
GaAs matched, $40$ mW & $0.104$ & $2.77$\\
GaAs matched, $60$ mW & $0.115$ & $2.74$\\
GaAs matched, $131$ ms & $0.030$ & $3.10$\\
GaAs broadband & $0.049$--$0.052$ & $3.40$--$3.39$\\
QD broadband, $2$ mW & $0.072$ & $3.25$\\
QD broadband, $20$ mW & $0.105$ & $3.11$\\
QD broadband, $30$ mW & $0.117$ & $3.07$\\
QD broadband, $40$ mW & $0.110$ & $3.09$\\
\bottomrule
\end{tabular}
\end{table}

\section{Effective-mode convention and temporal-mode outlook}
\label{sn:modes}
The IID statement refers to effective symbols. All integrated autocorrelation times are in units of the \SI{0.5}{ns} sample: for GaAs $\tau_{\rm int}(X)=1.73$ and $\tau_{\rm int}(X^2)=1.43$; the quantum-dot values $1.63$ and $1.43$ match, the decorrelation being set by the shared receiver low-pass. This fixes the predeclared broadband spacing of two samples, $f_s=\SI{1.0}{GHz}$ and $n_\phi=n/2$, with per-phase statistical floors $0.033$ (GaAs) and $0.036$ (QD).
In the matched mode one bosonic mode is counted per inverse filter bandwidth — the two real quadratures of a carrier mode are not independent source modes — giving $f_s=\SI{10}{MHz}$, half the $2\times10^{7}$ real quadrature degrees of freedom per second that a \SI{10}{MHz} band carries. The energy-test count $n_\phi=n/\tau_{\rm int}(Z^2)=2.54\times10^4$ ($\tau_{\rm int}(Z^2)=103.3$ samples, $2\,621\,440/103.3=25\,377$) follows the measured decorrelation of the squared filtered statistic and is separate from the generation-mode count, the squared statistic decorrelating faster than $Z$ itself. It is a declared effective-sample convention, and the support-disjoint fallback below prices it. {The support-disjoint convention gives $\lfloor2\,621\,440/1001\rfloor=2618$ test symbols on the short record and $\lfloor262\,144\,000/1001\rfloor=261\,882$ on the long \SI{131}{ms} record; the latter gives $\nmax^{\rm test}=0.020$ and $h^*_{\rm cert}=3.17$ bits per symbol for the long record, corresponding to \SI{32}{Mbit/s}, one step below the reported decorrelation-convention value. The matched filter is the single-pass $1001$-tap FIR ($m_h=1001$, hence $q^{\rm match}_{\rm clip}=0.026026$; a forward--backward pass would double the support). The bin width is $\delta=K\,{\rm LSB}_H=0.111$ at $K=64$, {${\rm LSB}_H=(Y_{\max}+Y_{\min})/4096=1.74\times10^{-3}$}, i.e.\ $K$ times the calibrated homodyne LSB of the filtered mode normalised to vacuum variance $\tfrac12$, not the un-normalised filtered voltage. The tabulated $\sigma^2$ in each mode is a one-sided lower confidence bound on the trusted receiver noise: since $\nmax^{\rm test}=\nmax_{\rm true}+(\sigma^2_{\rm true}-\sigma^2_{\rm cal})$ and $c^*$ falls with $\sigma^2$ throughout the calibration interval of each mode, only $\sigma^2_{\rm cal}\le\sigma^2_{\rm true}$ is conservative. The monotonicity is local to that interval and is not claimed globally.}

Two parts of the offline analysis are assumptions rather than measured properties. {First, decorrelation spacing renders the retained outcomes approximately uncorrelated at second order but does not by itself establish independence of the underlying quantum states, a product structure of $\rho_{AE}$ across rounds, or orthonormality of the temporal modes; in particular, adjacent matched-mode outputs at the \SI{10}{MHz} convention share raw samples within the 1001-tap filter support, so the empirical-Bernstein independence hypothesis is a declared modelling convention there, not a proven property; the support-disjoint count of the previous note removes the shared filter support but leaves the product-state assumption in place, and the broadband quantum-dot mode, with two-sample spacing and no long filter, is unaffected. Second, the records are sequential fixed-phase traces, so Protocol~2 is evaluated under stationarity between the phase records, and the offline evaluation draws the energy estimate and the generation statistics from the same phase-$0$ record, whereas a real-time run assigns disjoint test and generation rounds with a private seed before emission. An explicit orthonormal temporal-mode family would remove the first assumption: from the analysis filters $\{g_j\}$ (back-propagated through the calibrated receiver response) one forms the Gram matrix $G_{jk}=\int g_j^*g_k$ and, for $G\succ0$, the L\"owdin waveforms $u_i=\sum_jg_j[G^{-1/2}]_{ji}$, which satisfy $\int u_i^*u_\ell=(G^{-1/2}GG^{-1/2})_{i\ell}=\delta_{i\ell}$ and hence canonical commutators; support-disjoint windows are the simplest special case. Implementing and recalibrating this mode family would replace the matched-mode counting convention by a constructed mode rate. The receiver noise must also be checked for correlations across the new modes because optical-mode orthogonality alone does not guarantee electronic-noise independence.}

\section{Security-theorem proofs and the non-IID setting}
\label{sn:proofs}
\paragraph{Protocol 1.}
The truncation certificate gives the single-round bound $H_{\min}(X|E)\ge h^*_{\rm cert}(\nmax)$ for every admissible $\rho_{AE}$. Conditional min-entropy is exactly additive on product states, so $H_{\min}(X^n|E^n)_{\omega^{\otimes n}}=nH_{\min}(X|E)_\omega\ge n\,h^*_{\rm cert}(\nmax)$, and smoothing can only increase it.

\paragraph{Protocol 2.}
Except with probability $\varepsilon_{\rm test}$ the tested value $\nmax^{\rm test}(T)$ upper-bounds the common per-round mean energy of the IID source. Fix a good accepted transcript. Because the round assignment and phase choices are made independently of the emitted state and before emission, the test measurements and abort condition act only on the test tensor factors, so the conditional state of the generation rounds given $(T{=}t,F{=}1)$ is $\rho_{AE}^{\otimes n_{\rm gen}(t)}$ {(tensored with post-measurement test-side systems that carry no correlation with the generation factors)} with $\Tr(\rho_A\hat n)\le\nmax^{\rm test}(t)$. The single-round bound and additivity then give $H_{\min}(X^{n_{\rm gen}}|E^{n_{\rm gen}},T{=}t,F{=}1)\ge n_{\rm gen}(t)\,h^*_{\rm cert}(\nmax^{\rm test}(t))$; the composable error of the extracted string is abort-inclusive, $\varepsilon_{\rm range}+\varepsilon_{\rm test}+2\varepsilon_s+\varepsilon_{\rm ext}$. {The stated rates are output rates of the seeded protocol: the two-universal hash seed and the round-partition seed are public or external resources in the standard seeded-extraction accounting, and no randomness-expansion claim is made.}

\paragraph{Non-IID setting.}
The IID assumption tensorises the entropy (both protocols) and transfers the test-round energy to the generation rounds (Protocol~2). For product states the asymptotic equipartition property only reproduces the weaker $n\,h^*_{\rm cert}-\sqrt n\,\Delta_{\rm AEP}$, so it brings no advantage without an independent bound on $H(X|E)$ strictly above the min-entropy. A fully non-IID proof would use entropy accumulation, which requires a min-tradeoff function for the energy-constrained discretised homodyne POVM — i.e., a lower bound on $\inf\{H(X|E)_\rho:\Tr(\rho_A\hat n)\le\nmax,\ p_X^\rho=q\}$ for every admissible $q$ — together with a non-IID sampling argument for the two-quadrature test; whether Gaussian states are extremal for this discretised constrained problem is open. A recent energy-bounded SDI generator~\cite{Lu2026} achieves finite-size security for correlated rounds via Kato's inequality in a ternary-input heterodyne protocol; that analysis presupposes within-run randomised preparation and must be re-instantiated for the present phase-switched discretised-homodyne POVM.

\clearpage
\section*{Supplementary Figures}

\begin{figure}[h]
    \centering
    \includegraphics[width=\linewidth]{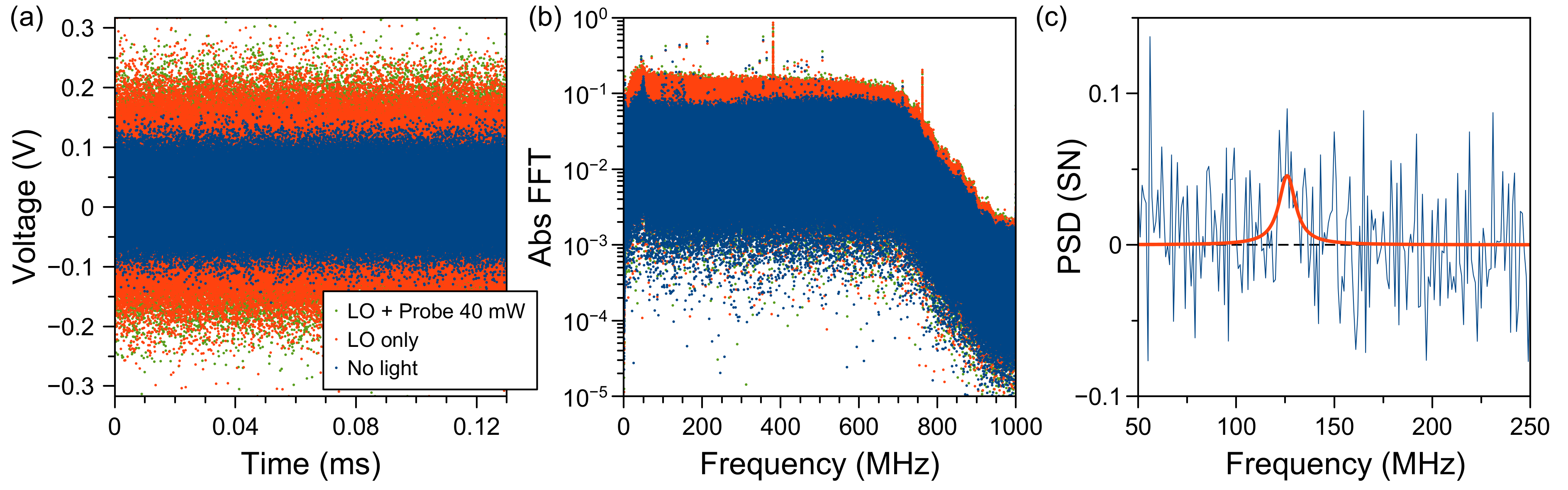}
    \caption{\textbf{Raw records and block-averaged spectra ($n$-GaAs).} (a) ADC time traces at \SI{0.5}{ns} resolution: electronic noise (blue), LO-only (red), LO with \SI{40}{mW} probe scattered light (green, maximising phase); LO power \SI{2}{mW} per diode. (b) Single-trace FFT spectra: the few-percent spin-noise excess is masked by the order-unity fluctuation of the shot-noise floor. (c) PSD in photon-shot-noise units, $(\mathrm{LO{+}Probe}-\mathrm{LO})/(\mathrm{LO}-\mathrm{dark})$, from \SI{1}{\micro\second} blocks averaged over 1310 blocks: the Lorentzian at $f_0=\SI{126.0}{MHz}$ (half-width \SI{5.15}{MHz}) is recovered (red fit), defining the matched mode.}
    \label{fig:adc}
\end{figure}

\begin{figure}[h]
    \centering
    \includegraphics[width=0.85\linewidth]{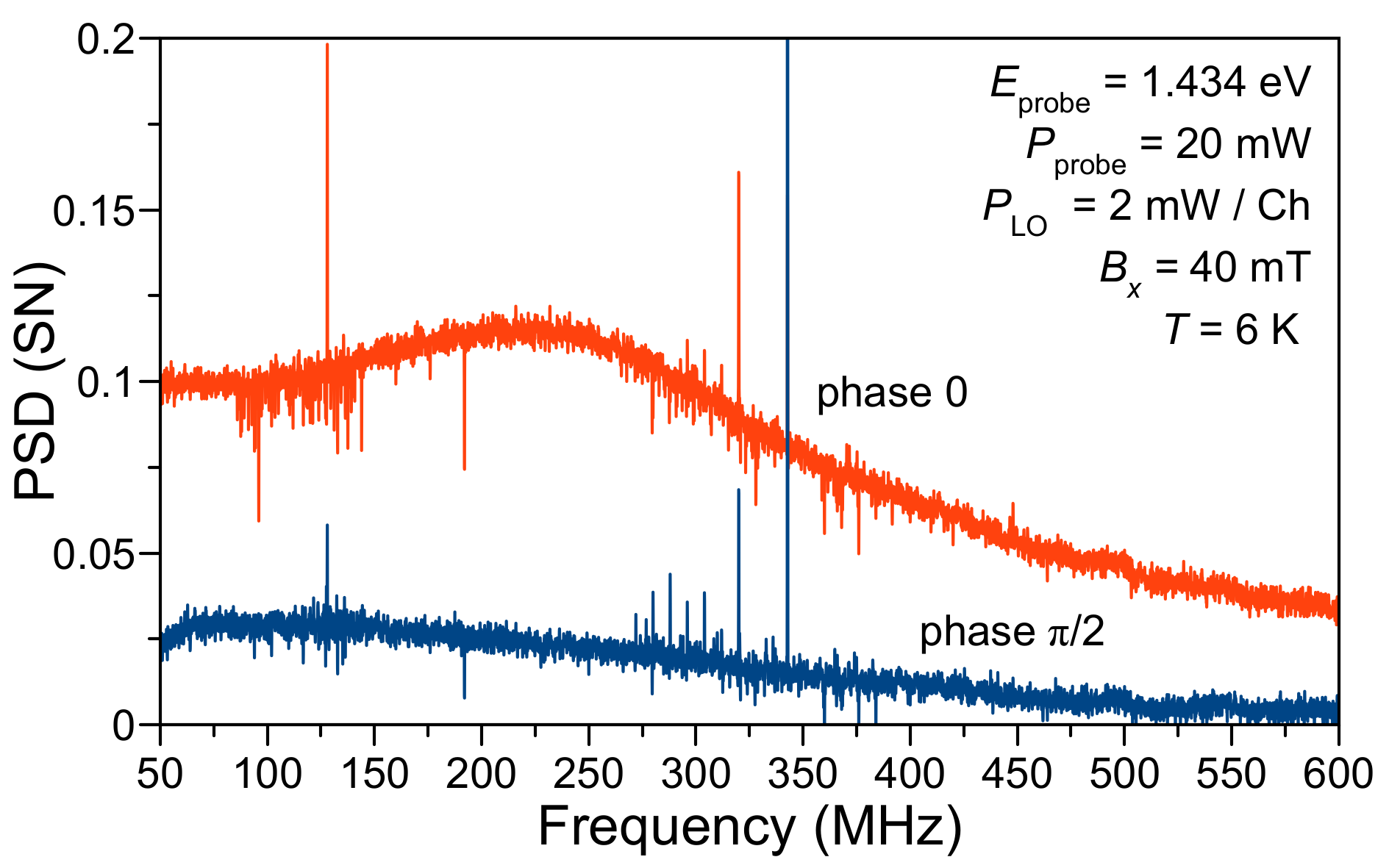}
    \caption{\textbf{Broadband quantum-dot spin noise.} Power spectral density of the electron spin noise in the $(\mathrm{In},\mathrm{Ga})\mathrm{As}$ quantum-dot ensemble at probe power \SI{20}{mW}, in photon-shot-noise units (10-second FFT integration), for the two locked operating phases. The inhomogeneously broadened spectrum fills the detection window up to the analogue cutoff, so every digitised sample carries source energy.}
    \label{fig:qdpsd}
\end{figure}

\begin{figure}[h]
    \centering
    \includegraphics[width=\linewidth]{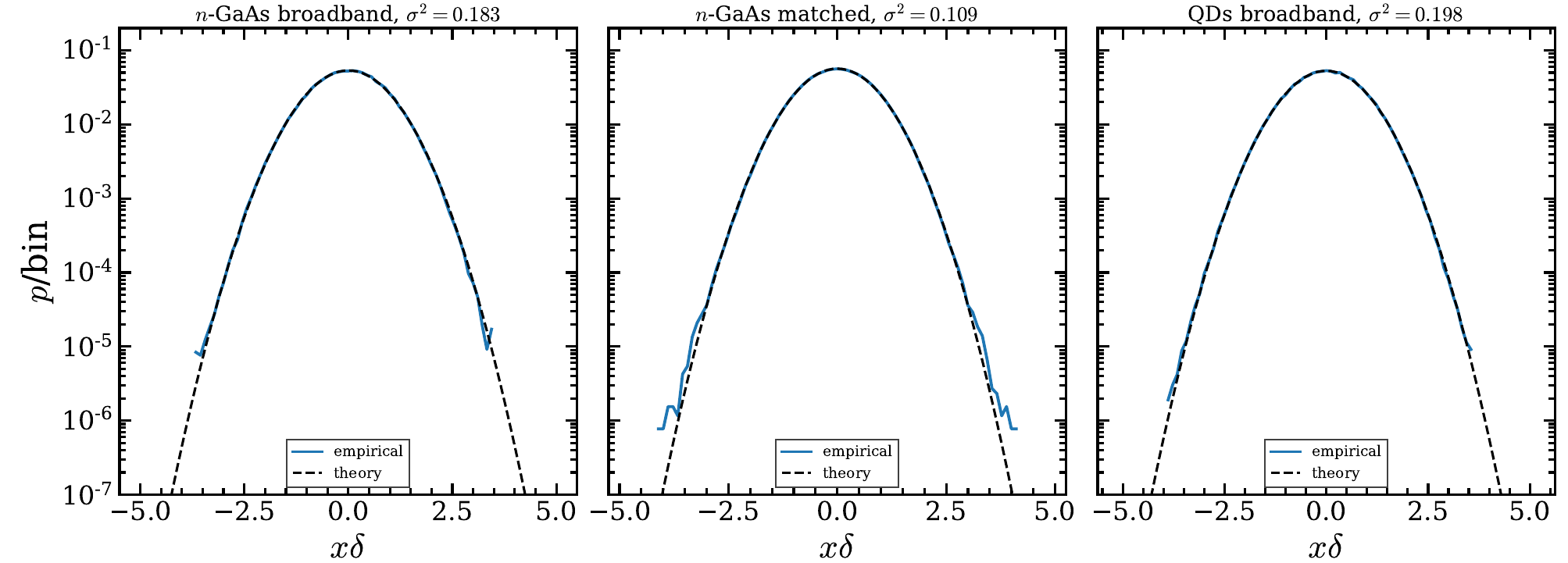}
    \caption{\textbf{Vacuum-reference check of the calibrated POVM.} Empirical binned-symbol distribution of the LO-only reference against the noisy-homodyne POVM diagonal $[\Pi_x]_{00}$ at $\nmax=0$ for the three calibrated receiver modes. The central probability mass agrees with the calibrated vacuum model over the range in which the LO-only statistics are resolved. In the matched mode the empirical tails exceed the Gaussian prediction by a factor of about $1.5$--$2.5$ between $p/\mathrm{bin}\sim10^{-5}$ and $10^{-6}$, and by close to an order of magnitude in the last two resolved bins, which hold only a few counts; these bins carry negligible weight in $\max_x[\Pi_x]_{00}$ and hence in the certificate, but the deviation shows that Gaussianity of the trusted receiver noise is a calibrated model assumption rather than a verified property in the far tails. This is a receiver-calibration test, not a source-randomness claim. {Because the noise variance and receiver zero are calibrated from the same LO-only records, the comparison is an internal consistency check of the receiver model rather than held-out validation.}}
    \label{fig:lohist}
\end{figure}

\begin{figure}[h]
    \centering
    \includegraphics[width=0.85\linewidth]{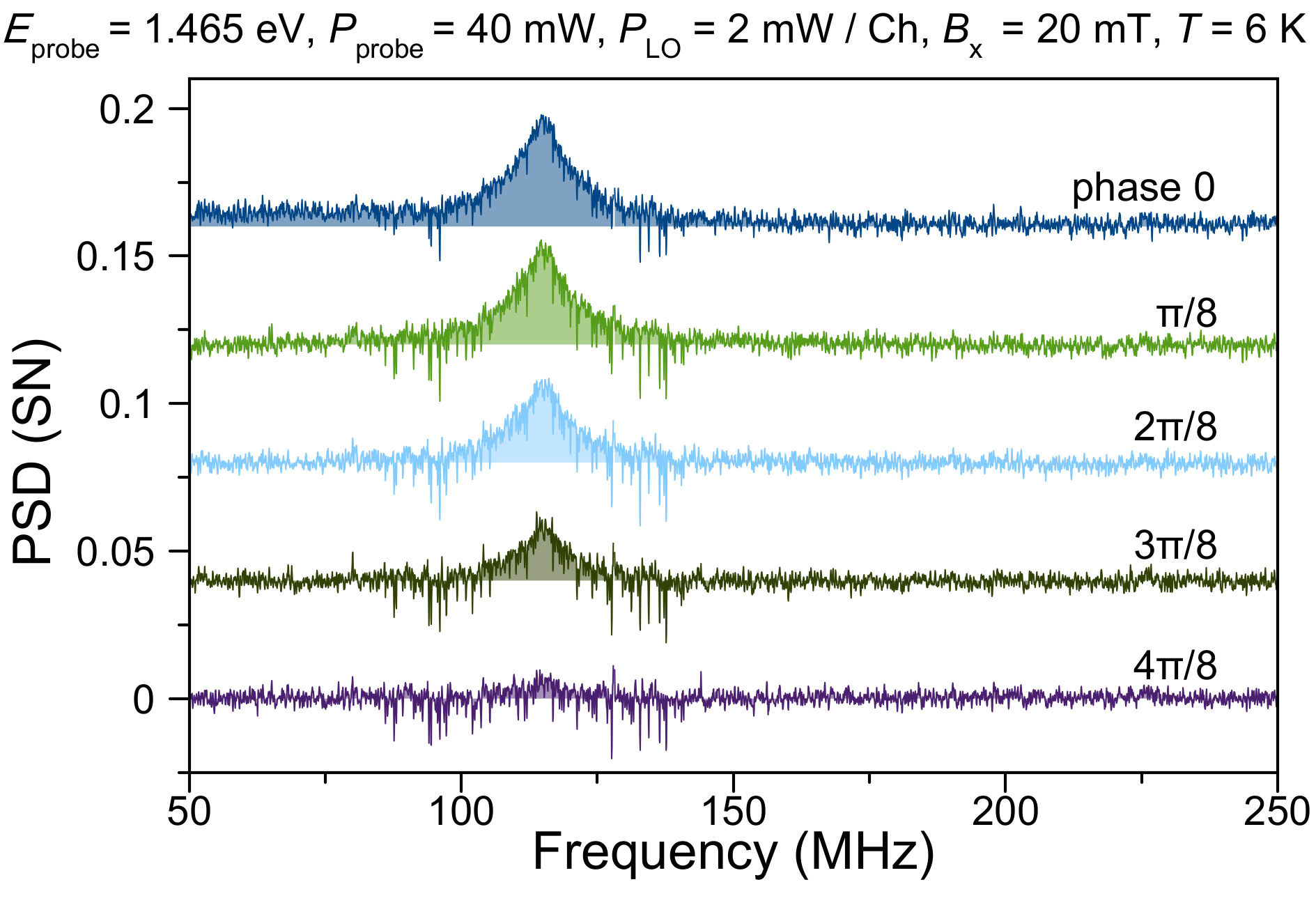}
    \caption{\textbf{Phase dependence of the spin-noise signal.} $n$-GaAs spin-noise PSD at \SI{40}{mW} for five LO--probe phase settings between $\phi=0$ and $\pi/2$ (spectra offset for clarity). The peak follows the expected sinusoidal dependence and is fully suppressed at the second operating point; the scan verifies the phase-sensitive suppression used to lock the test pair, while exact orthogonality ($\bar c=0$) remains a declared calibration (Supplementary Note~6).}
    \label{fig:phase_scan}
\end{figure}

\begin{figure}[h]
    \centering
    \includegraphics[width=0.85\linewidth]{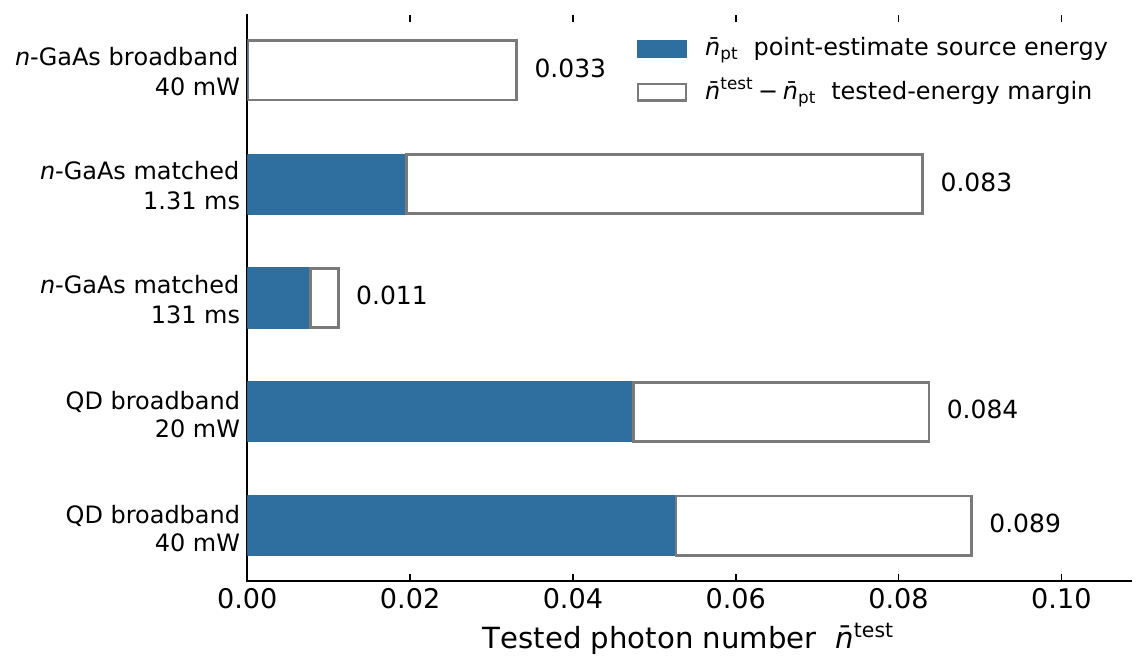}
    \caption{\textbf{Decomposition of the tested energy bound.} Tested photon-number bound $\nmax^{\rm test}$ split into the point-estimate source energy $\nmax_{\rm pt}$ and the statistical/calibration margin $\nmax^{\rm test}-\nmax_{\rm pt}$ for representative operating points. The short-record GaAs matched point and the GaAs broadband benchmark are margin-dominated; the \SI{131}{ms} matched point and the quantum-dot points carry a resolved source-energy contribution.}
    \label{fig:decomposition}
\end{figure}

\begin{figure}[h]
    \centering
    \includegraphics[width=0.85\linewidth]{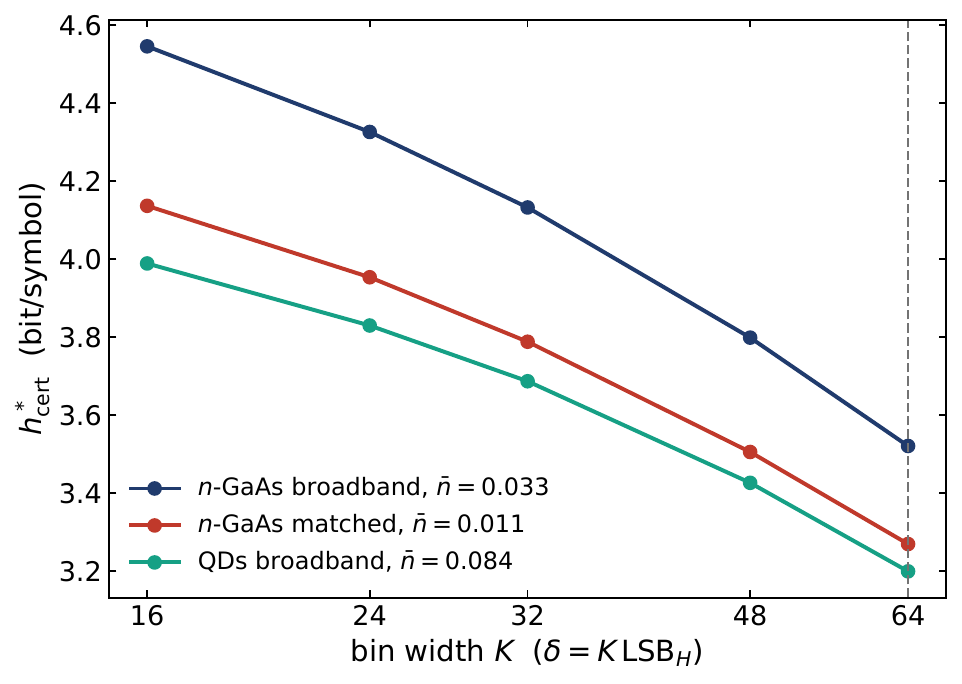}
    \caption{\textbf{Bin-width dependence.} Certified per-symbol entropy versus the bin-width parameter $K$ ($\delta=K\,\mathrm{LSB}_H$) at three operating points; the dashed line marks the fixed operating value $K=64$. Fine binning follows the differential-entropy scaling and coarse binning erodes the certificate, so $K=64$ is an operating choice rather than a tuned post-selection.}
    \label{fig:ksweep}
\end{figure}

\clearpage
\section*{Supplementary Tables}

\begin{table}[h]
\caption{Trusted-receiver calibration. The two GaAs columns use the averaged LO-only convention from the two phase settings; the quantum-dot reference is a single LO-only trace fixing its zero and scale. The matched column applies to the GaAs resonance bandpassed to $126\pm\SI{5}{MHz}$; its energy-test size is the effective squared-filter sample count $n/\tau_{\rm int}(Z^2)$ entering the empirical-Bernstein bound (Supplementary Note~7). {The row $\Delta_{\rm LO}=0$ is exact for the two GaAs columns, where the averaged two-phase LO-only convention cancels the noise-variance asymmetry in the sum; for the single-reference quantum-dot column it holds under the stationarity assumption (one LO reference fixes the QD scale for both phase records), and carrying instead the conservative shared bound $\Delta_{\rm LO}\le\delta_{\rm LO}/(2(1-\bar c))\approx2.5\times10^{-3}$~photons would lower each QD $h^*_{\rm cert}$ by $\lesssim0.014$~bit (largest at the lowest tested energy; re-solved from the same SDP), within the reported precision and leaving the $3.2$--$3.4$~Gbit\,s$^{-1}$ range unchanged.}}
\label{tab:calibration}
\centering
\footnotesize
\begin{tabular}{lccc}
\toprule
 & GaAs broadband & GaAs matched & QD broadband\\
\midrule
$\sigma^2$ & $0.183$ & $0.109$ & $0.198$\\
$Y_{\max}$ & $3.69$ & -- & $3.84$\\
$Y_{\min}$ & $3.42$ & -- & $3.59$\\
$\mathrm{LSB}_H$ (homodyne) & $1.738\times10^{-3}$ & $1.74\times10^{-3}$ & $1.813\times10^{-3}$\\
bin width $\delta=64\,\mathrm{LSB}_H$ & $0.111$ & $0.111$ & $0.116$\\
receiver-zero offset (codes) & $79$ & -- & $69$\\
measured clip fraction $\hat\delta_{\rm clip}$ & $2\times10^{-5}$ & $2\times10^{-5}$ & $2\times10^{-5}$\\
raw clip ceiling $\delta_{\rm clip}^{\max}$ & $2.6\times10^{-5}$ & $2.6\times10^{-5}$ & $2.6\times10^{-5}$\\
filtered clip ceiling $\delta_{\rm clip,filt}^{\max}$ & -- & $4.0\times10^{-8}$ & --\\
generation symbol rate $f_s$ & $1.0$ GHz & $10$ MHz & $1.0$ GHz\\
energy-test size $n_\phi$ & $1.31\times10^{6}$ & $2.54\times10^{4}$ & $1.31\times10^{6}$\\
pre-clip bound $L=2Y_{\max}$ & $7.4$ & -- & $7.7$\\
energy-test range ($Y^2$; $Z^2$ matched) & $Y_{\max}^2=13.6$ & $C=(6\sigma_{\rm filt})^2=22$ & $Y_{\max}^2=14.7$\\
Fock cutoff $\Ncut$ & $200$ & $200$ & $200$\\
$\varepsilon_{\rm test}$ & $10^{-6}$ & $10^{-6}$ & $10^{-6}$\\
$\delta_{\rm LO}$ ($Y^2$) & $4.9\times10^{-3}$ & $4.9\times10^{-3}$ & shared GaAs\\
{per-phase LO-only offset $|d_\phi|$ (codes)} & {$0.31$} & {$0.31$} & {shared GaAs}\\
$\Delta_{\rm LO}$ ($\nmax$) & $0$ & $0$ & $0$\\
\bottomrule
\end{tabular}
\end{table}

\begin{table}[p]
\caption{\textbf{Positioning of the present work.}
Each final-column entry is reported as entropy/rate without rescaling.
The corresponding trust model and certificate are identified using the
abbreviations defined below. ``Theory'' denotes that no experimental
throughput was reported.}
\label{tab:positioning}
\centering
\footnotesize
\setlength{\tabcolsep}{3pt}
\renewcommand{\arraystretch}{1.10}

\begin{tabular*}{\linewidth}{@{\extracolsep{\fill}}lll@{}}
\toprule
Platform and reference
& Model / certificate
& Entropy / rate \\
\midrule

\multicolumn{3}{@{}l}{\textit{Experimental demonstrations}}\\
\addlinespace[2pt]

Alkali SNS \cite{Katsoprinakis2008}
& DD; THR
& n.q. / $\sim 1$ kbit/s \\

Alkali SNS \cite{Dellis2024}
& DD; NIST+TE
& $0.995\pm0.001$ b/out. / $1.04$ Mbit/s \\

Vacuum HD \cite{Gabriel2010}
& DD; QE
& $3.25$ b/samp. / $6.5$ Mbit/s \\

Vacuum HD \cite{Haw2015}
& DD; CMI-C
& $13.75$ b/samp. / $3.55$ Gbit/s \\

Vacuum HetD \cite{Avesani2018}
& SI; HRB
& $13.949$ b/samp. / $17.42$ Gbit/s \\

Sq./thermal HD \cite{Michel2019}
& SI; EUR-Q
& op.-dep. / $8.2$ kbit/s (sq.); $5.2$--$7.2$ kbit/s (th.) \\

Untrusted-light PD \cite{Drahi2020}
& SI; PNT-Q
& $5.32$ b/samp. / $8.05$ Gbit/s \\

Optical SI-QRNG \cite{Cao2016}
& SI; FK
& op.-dep. / $>5$ kbit/s \\

BPSK HD \cite{Rusca2020}
& SDI-E; SMH
& op.-dep. / $145.5$ Mbit/s \\

BPSK HetD \cite{Avesani2020}
& SDI-E; SDP
& $0.09$ b/meas. / $113$ Mbit/s \\

Vacuum HetD \cite{Cizauskas2026}
& SI; HRB
& $12.681$ b/rnd. / $33.92$ Gbit/s \\

This work: GaAs
& SDI-E (RT); F-SDP
& $2.64$--$3.25$ b/mode / $26$--$32$ Mbit/s \\

This work: QD
& SDI-E (RT); F-SDP+SR
& $3.10$--$3.30$ b/mode / $3.10$--$3.30$ Gbit/s \\

\addlinespace[3pt]
\midrule
\multicolumn{3}{@{}l}{\textit{Theory and certification frameworks}}\\
\addlinespace[2pt]

Binary-output CV \cite{VanHimbeeck2017}
& SDI-E; binary
& op.-dep. / theory \\

Multi-outcome CV \cite{Tebyanian2021}
& SDI-E; $d$-outcome
& $\leq 1.585$ b/rnd. / theory \\

Discretised CV \cite{Tebyanian2024SDP}
& SDI-D; F-SDP
& $0.382$ b/rnd. (max.) / theory \\

Squeezed BPSK
\cite{tebyanian2026squeezedstatesemideviceindependentquantumrandomness}
& GO-T; UD; Shannon-C
& $0.199$--$0.516$ b/sym. / theory \\

\bottomrule
\end{tabular*}

\vspace{4pt}
\begin{minipage}{\linewidth}
\scriptsize
\textit{Abbreviations.}
SNS, spin noise; HD, homodyne detection; HetD, heterodyne detection;
PD, photodetection; CV, continuous variable; DD, device-dependent;
SI, source-independent; SDI-E, energy-bounded semi-device-independent;
SDI-D, dimension-bounded semi-device-independent; RT, receiver trusted;
THR, threshold extraction; TE, Toeplitz extraction; QE, quantum-entropy
estimate; CMI-C, conditional min-entropy against classical side
information; HRB, Husimi-resolution bound; EUR-Q, entropic-uncertainty
relation against quantum side information; PNT-Q, photon-number test
against quantum side information; FK, finite-key certificate; SMH,
smooth-min-entropy certificate; SDP, semidefinite-program certificate;
F-SDP, truncated-Fock SDP; SR, signal-resolved; GO-T, trusted
Gram-overlap model; UD, untrusted detector; Shannon-C, Shannon-entropy
certificate against classical side information; n.q., not quantified;
op.-dep., operating-point dependent; sq., squeezed; th., thermal;
b/out., bits per output bit; b/samp., bits per sample; b/meas., bits per
measurement; b/rnd., bits per round; and b/sym., bits per symbol.
\end{minipage}
\end{table}

\clearpage
\bibliographystyle{unsrt}
\bibliography{ref}

\end{document}